\title{Complexity of Evaluating GQL Queries}
\titlerunning{Complexity of Evaluating GQL Queries}
\author{Diego Figueira}{CNRS, LaBRI, Univ. Bordeaux, France}{diego.figueira@labri.fr}{https://orcid.org/0000-0003-4715-5096}{}
\author{Anthony W. Lin}{University of Kaiserslautern-Landau and MPI-SWS,
Germany}{awlin@mpi-sws.org}{https://orcid.org/0000-0003-0114-2257}{}
\author{Liat Peterfreund}{Hebrew University of Jerusalem, Israel}{liat.peterfreund@mail.huji.ac.il}{https://orcid.org/0000-0002-4788-0944}{}
\authorrunning{Figueira, Lin and Peterfreund}
\keywords{Graph query languages, GQL, complexity, database theory}
\definecolor{Dark Ruby Red}{HTML}{5d1416}
\definecolor{Dark Gamboge}{HTML}{be7c00}
\definecolor{Dark Blue Sapphire}{HTML}{053641}
\theoremstyle{plain}
\theoremstyle{definition}
	\newtheorem*{example*}{Example}          
\newcommand{\struct}{\mathbf{S}}
\newcommand{\reals}{\mathbb R}
\renewcommand{\phi}{\varphi}
\renewcommand{\leq}{\leqslant}
\renewcommand{\emptyset}{\varnothing}
\newcommand{\set}[1]{\{#1\}}
\newrobustcmd{\tup}[1]{\langle #1 \rangle}
\newrobustcmd{\defeq}{\mathrel{\hat{=}}}
\newrobustcmd{\N}{\mathbb{N}}
\newrobustcmd{\Z}{\mathbb{N}}
\newrobustcmd{\Q}{\mathbb{Q}}
\newcommand{\dcup}{\mathop{\dot\cup}} 
\newrobustcmd\pset[1]{\wp(#1)} 
\knowledgenewrobustcmd{\mset}[1]{\cmdkl{\{\!\!\{}%
  #1%
  \cmdkl{\}\!\!\}}%
}
\knowledgenewrobustcmd{\Rel}{\cmdkl{\textsf{Rel}}} 
\knowledgenewrobustcmd{\Const}{\cmdkl{\textsf{Const}}} %
\knowledgenewrobustcmd{\Func}{\cmdkl{\textsf{Func}}} %
\knowledgenewrobustcmd{\arity}{\cmdkl{\textit{ar}}} 
\newcommand{\tinysep}{\hspace{-.265em}}
\knowledgenewrobustcmd{\interp}[2]{\cmdkl{[\tinysep[}#1\cmdkl{]\tinysep]^{#2}}}
\knowledgenewrobustcmd{\sem}[3]{\cmdkl{\llbracket}#1\cmdkl{\rrbracket^{#3}_{#2}}}%
\knowledgenewrobustcmd{\semRel}[3]{\cmdkl{[\tinysep[}#1\cmdkl{]\tinysep]^{#3}_{#2}}}%
\knowledgenewrobustcmd{\semPA}[1]{\cmdkl{[\tinysep[}#1\cmdkl{]\tinysep]}}%
\knowledgenewrobustcmd{\modelsem}{\mathrel{\cmdkl{\models}}}%
\newcommand{\logicOp}[1]{\textup{\textsf{#1}}}
\knowledgenewrobustcmd{\AggTC}{\cmdkl{\logicOp{AggTC}}}
\knowledgenewrobustcmd{\TC}{\cmdkl{\logicOp{TC}}}
\knowledgenewrobustcmd{\TCemb}{\cmdkl{\logicOp{TC}}} 
\knowledgenewrobustcmd{\Aggr}{\cmdkl{\logicOp{Agg}}} 
\renewcommand{\epsilon}{\varepsilon}
\newrobustcmd\labelwithproof[1]{%
\label{#1}%
\ifproofappendix%
\marginnote{\footnotesize{%
  \textnormal{First stated in page~\pageref{#1}.}%
}}
\else%
\marginnote{\footnotesize{%
  \ifarxiv%
    \textnormal{See the proof of \Cref{#1} in page~\pageref{proof-#1}.}%
  \else%
  \fi%
}}%
\fi%
}
\newrobustcmd\introinrestatable[1]{%
\ifproofappendix%
\kl{#1}%
\else%
\intro{#1}%
\fi%
}
\newrobustcmd\introinrestatableopt[1]{%
\ifproofappendix%
\kl[#1]{#1}%
\else%
\intro[#1]{#1}%
\fi%
}
\newrobustcmd\recall[1]{
  \proofappendixtrue%
    #1*
  \proofappendixfalse%
}
\newrobustcmd\Crefapdx[1]{%
  \ifarxiv%
    \Cref{apdx:#1}%
  \else%
    the "full version"%
  \fi%
}
\definecolor{green}{RGB}{0,120,0}
\definecolor{hlyellow}{RGB}{250, 250, 190}
\definecolor{diegoeditcolor}{RGB}{210,210,255}
\definecolor{liateditcolor}{RGB}{244,210,255}
\definecolor{santieditcolor}{RGB}{255,255,180}
\definecolor{michaeleditcolor}{RGB}{200,165,90}
\newcommand{\sidediego}[1]{\todo[backgroundcolor=diegoeditcolor, size=\tiny]{{\fontsize{5}{7}\selectfont {\bf D:} #1}}}
\definecolor{light-gray}{gray}{0.9}
\newcommand{\sideliat}[1]{\todo[backgroundcolor=liateditcolor, size=\tiny]{{\fontsize{5}{7}\selectfont {\bf L:} #1}}}
\definecolor{light-gray}{gray}{0.9}
\newcommand{\OMIT}[1]{}
\newrobustcmd{\wrote}{\color{wrote}\scriptsize\text{wrote}}
\newrobustcmd{\advised}{\color{advised}\scriptsize\text{advised}}
\newcommand{\ie}{\textit{i.e.}}
\newcommand{\eg}{\textit{e.g.}}
\knowledgenewrobustcmd{\A}{\mathbb{A}} 
\knowledgenewrobustcmd{\Aext}{\cmdkl{\mathbb{A}^\pm}} 
\knowledgenewrobustcmd\vertex[1]{\cmdkl{V}(#1)}
\knowledgenewrobustcmd\edges[1]{\cmdkl{E}(#1)}
\knowledgenewrobustcmd\qvar{\footnotesize\bullet} 
\knowledgenewrobustcmd{\pathl}{\cmdkl{\mathbf{P}_{\!l}}} 
\knowledgenewrobustcmd\subaut[3]{#1\cmdkl{[#2,#3]}}
\knowledgenewrobustcmd\bagmap{\cmdkl{\mathbf{v}}}
\knowledgenewrobustcmd\tagmap{\cmdkl{\mathbf{t}}}
\knowledgenewrobustcmd\tagmappath[1]{\cmdkl{\mathbf{t}[#1]}}
\newrobustcmd\tagmappathprime[1]{%
  \withkl{\kl[\tagmappath]}{%
    \cmdkl{\mathbf{t}'[#1]}%
  }%
}
\knowledgenewrobustcmd{\atom}[1]{\,\xrightarrow{\smash{#1}}\,}
\knowledgenewrobustcmd{\coatom}[1]{\,\xleftarrow{\smash{#1}}\,}
\knowledgenewrobustcmd{\atoms}[1]{\cmdkl{\textnormal{Atoms}}(#1)}
\knowledgenewrobustcmd{\contained}{\mathrel{\cmdkl{\subseteqq}}}
\newrobustcmd{\strcontained}{
  \mathrel{\withkl{\kl[\contained]}{\cmdkl{%
    \subsetneqq
  }}}
}
\knowledgenewrobustcmd{\semequiv}{\mathrel{\cmdkl{\LaTeXequiv}}}
\knowledgenewrobustcmd{\nbatoms}[1]{\cmdkl{\|}#1\cmdkl{\|}}
\knowledgenewrobustcmd{\vars}{\cmdkl{\textit{vars}}} 
\newrobustcmd{\collapse}{\approx}
\knowledgenewrobustcmd{\Exp}{\cmdkl{\textnormal{Exp}}} 
\knowledgenewrobustcmd{\UCtwoRPQ}{\cmdkl{\textnormal{UC2RPQ}}}
\newrobustcmd{\CtwoRPQ}{%
  \withkl{\kl[\UCtwoRPQ]}{\cmdkl{%
    \textnormal{C2RPQ}
  }}
}
\knowledgenewrobustcmd{\UCRPQSRE}{\ensuremath{\cmdkl{\textup{UCRPQ}(\textup{SRE})}}}
\newrobustcmd{\CRPQSRE}{%
  \withkl{\kl[\UCRPQSRE]}{\cmdkl{%
    \textup{CRPQ}(\textup{SRE})
  }}
}
\knowledgenewrobustcmd{\homto}{\mathrel{\cmdkl{\xrightarrow{\smash{\textit{\tiny hom}}}}}}
\newcommand{\xrightarrowdbl}[2][]{%
  \xrightarrow[#1]{#2}\hspace{-.8em}\xrightarrow{}
}
\knowledgenewrobustcmd\surj{\mathrel{\cmdkl{\xrightarrowdbl{\smash{\textit{\tiny hom}}}}}}
\knowledgenewrobustcmd{\fun}{f}
\knowledgenewrobustcmd{\class}{\mathcal{C}}
\knowledgenewrobustcmd{\Tw}[1][k]{\cmdkl{\mathcal{T\!w}_{#1\!}}}
\knowledgenewrobustcmd{\Refin}[1][]{\cmdkl{\textnormal{Ref}^{\smash{#1}}}}
\knowledgenewrobustcmd{\MUA}[2]{\cmdkl{\ensuremath{\textnormal{App}_{#2}(#1)}}}
\knowledgenewrobustcmd{\MUAHom}[2]{\cmdkl{\ensuremath{\textnormal{App}_{#2}^{\smash{\star}}(#1)}}}
\knowledgenewrobustcmd{\MUAHomBounded}[3]{\cmdkl{\ensuremath{\textnormal{App}_{#2}^{\smash{\star,#3}}(#1)}}}
\knowledgenewrobustcmd{\type}{\cmdkl{\textnormal{type}}}
\knowledgenewrobustcmd{\Qapp}{\cmdkl{\ensuremath{\textnormal{App}_{\Tw}^{\textup{zip}}(\gamma)}}}
\knowledgenewrobustcmd{\contract}[1]{\cmdkl{[}#1\cmdkl{]}}
\newcommand{\gpckw}[1]{\texttt{#1}}
\knowledgenewrobustcmd{\NodeSet}{\cmdkl{\mathcal{N}}}
\knowledgenewrobustcmd{\EdgeSet}{\cmdkl{\mathcal{E}}}
\knowledgenewrobustcmd{\LabelSet}{\mathcal{L}}
\knowledgenewrobustcmd{\PropertySet}{\mathcal{P}}
\knowledgenewrobustcmd{\KeySet}{\mathcal{K}}
\newrobustcmd{\ConstSet}{\mathcal{C}}
\newcommand{\NodeRel}{\mathsf{Node}}
\newcommand{\VarSet}{\Vars}
\newcommand{\labelf}{\mathsf{lbl}}
\newcommand{\srcf}{\mathsf{src}}
\newcommand{\tgtf}{\mathsf{tgt}}
\newcommand{\propf}{\mathsf{prop}}
\newcommand{\df}{:=}
\knowledgenewrobustcmd{\gpcppat}{\cmdkl{\pi}}
\knowledgenewrobustcmd{\gpcgpat}{\cmdkl{\Pi}}
\knowledgenewrobustcmd{\gpcq}{\cmdkl{\mathcal{Q}}}
\knowledgenewrobustcmd{\fv}[1]{\cmdkl{\mathsf{fv}}#1}
\knowledgenewrobustcmd{\return}{\cmdkl{\Omega}}
\newcommand{\gpcterm}{\chi}
\newcommand{\restrict}{\rho}
\knowledgenewrobustcmd{\dom}{\cmdkl{\textit{dom}}}%
\knowledgenewrobustcmd{\adom}{\cmdkl{\textit{adom}}}%
\knowledgenewrobustcmd{\semgpc}[1]{\cmdkl{\llbracket}#1\cmdkl{\rrbracket}}
\newcommand{\join}{\bowtie}
\newcommand{\comp}{\sim}
\newcommand{\gdb}{{G}}
\newcommand{\src}[1]{\mathsf{src}(#1)}
\newcommand{\len}[1]{\mathsf{len}(#1)}
\newcommand{\tgt}[1]{\mathsf{tgt}(#1)}
\newcommand{\sch}[1]{\mathsf{schema}(#1)}
\knowledgenewrobustcmd{\domf}{\cmdkl{\textit{dom}}}%
\knowledgenewrobustcmd{\schgpc}{\cmdkl{\textit{sch}}}%
\knowledgenewrobustcmd{\withAr}[2]{#1^{\cmdkl{(}#2\cmdkl{)}}}%
\knowledgenewrobustcmd{\embed}[2]{\cmdkl{\langle#1,#2\rangle}}%
\knowledgenewrobustcmd{\Vars}{\cmdkl{\textsf{Vars}}} 
\knowledgenewrobustcmd{\FO}{\cmdkl{\textup{FO}}}%
\knowledgenewrobustcmd{\rexists}{\cmdkl{\exists^{\adom}}}%
\knowledgenewrobustcmd{\FOTC}{\cmdkl{\textup{FO[TC]}}}%
\knowledgenewrobustcmd{\FOpTC}{\cmdkl{\textup{FO[TC}^+\textup{]}}}%
\knowledgenewrobustcmd{\FOHTC}{\cmdkl{\textup{FO[HTC]}}}%
\knowledgenewrobustcmd{\FOESO}{\cmdkl{\textup{FO[ESO]}}}
\knowledgenewrobustcmd{\ESO}{\cmdkl{\textup{ESO}}}%
\knowledgenewrobustcmd{\HSO}{\cmdkl{\textup{HSO}}}%
\knowledgenewrobustcmd{\HESO}{\cmdkl{\textup{HESO}}}%
\knowledgenewrobustcmd{\SO}{\cmdkl{\textup{SO}}}%
\knowledgenewrobustcmd{\RVars}{\cmdkl{\textsf{Var}^{\textsf{rel}}}} 
\knowledgenewrobustcmd{\existsRadom}{\cmdkl{\exists^{\adom}}}%
\knowledgenewrobustcmd{\EHSO}{\cmdkl{\textup{EHSO}}}%
\newcommand{\istruc}{\frak M}
\knowledgenewrobustcmd{\RDPQ}{\cmdkl{\textup{RDPQ}}}
\knowledgenewrobustcmd{\RQM}{\cmdkl{\textup{RQM}}}
\knowledgenewrobustcmd{\graphvoc}{\cmdkl{\sigma_{\mathsf{graph}}}}
\knowledgenewrobustcmd{\graphvoctyp}{\cmdkl{\sigma^{\mathsf{t}}_{\mathsf{graph}}}}
\knowledgenewrobustcmd{\restrictorTrail}{\cmdkl{\textsf{trail}}}
\knowledgenewrobustcmd{\restrictorAcyclic}{\cmdkl{\textsf{acyclic}}}
\knowledgenewrobustcmd{\restrictorShortest}{\cmdkl{\textsf{shortest}}}
\knowledgenewrobustcmd{\restrictorAll}{\cmdkl{\textsf{all}}}
\knowledgenewrobustcmd{\restrictorAny}{\cmdkl{\textsf{any}}}
\knowledgenewrobustcmd{\strLA}{\cmdkl{\istruc_{\N,+,<}}}%
\knowledgenewrobustcmd{\strRLA}{\cmdkl{\istruc_{\reals, +, \leq}}}%
\knowledgenewrobustcmd{\strROF}{\cmdkl{\istruc_{\reals, +, \times, \leq}}}%
\NewCommandCopy{\proofqedsymbol}{\qedsymbol}
\newcommand{\exampleqedsymbol}{{$\triangle$}}
\renewcommand{\qedsymbol}{\exampleqedsymbol}%
\newcommand{\partitle}[1]{\smallskip \textbf{#1.}}
\newcommand{\vocab}{\sigma}
\newcommand{\domain}{D}
\newcommand{\TCzero}{\text{TC}^0}
\newcommand{\ACzero}{\text{AC}^0}
\newcommand{\NL}{\text{NL}}
\begin{document}
	\maketitle
	
	\begin{abstract}
		GQL has recently emerged as the standard query language over graph databases
(particularly, the property graph model). Indeed, this is analogous to the role
of SQL for relational databases. Unlike SQL, however, fundamental problems
regarding GQL are hitherto still unsolved, most notably the complexity of query
evaluation. In this paper we provide a complete solution to this problem. In particular, we show that the data
complexity of GQL is $\text{P}^{\text{NP}[\log]}$-complete in general, and is $\text{NL}$-complete, when the so-called ``restrictors'' are disallowed. Using techniques from
embedded finite model theory, we show that this is true, even when the queries use data from infinite concrete domains (for example the domain of real numbers where arithmetic is allowed in the query). In proving these results, we establish and exploit tight connections between GQL and query languages over relational databases, especially the extension of relational calculus with transitive closure operators, 
and a fragment of second-order logic.

\OMIT{
==========================
GQL, the emerging standard for property-graph databases, lacks a definitive complexity analysis. 
We show that evaluating GQL queries is "PNPLOG"-complete in general, and drops to \NL-complete in the absence of path restrictors that force matches to be simple, trails, or shortest.  Using techniques from embedded finite model theory, we show that both bounds continue to hold when queries range over infinite domains such as the reals with arithmetic. The proofs reveal tight correspondences between GQL, relational calculus with transitive closure, and fragments of existential second-order logic, unifying the complexity landscape of graph and relational querying.
}
\OMIT{
We present a relational perspective on graph querying that combines a rich data model with the ability to query complex domains. Through this relational perspective, we are able to transfer results from the relational world to modern graph query languages, including the new GQL standard for query languages over property graphs, a data model extending previous graph models. In particular, we show that a large fragment of the GQL standard can be expressed through the extension of relational calculus with transitive closure operators (FO[TC]) and existential second-order quantifiers (ESO). 
This immediately yields optimal data complexity bounds, along with extensions for checking schema conformance. Not only that, the relational perspective 
allows us to tap into embedded finite model theory, yielding data complexity even when the queries use data from (infinite) concrete domains in graph database querying. More precisely,
leveraging the so-called Restricted Quantifier Collapse results to FO[TC] and ESO, we obtain optimal data complexity bounds for GQL with arithmetic operations and comparisons. 
Additionally, we demonstrate that Regular Data Path Querying with data operations (using register automata) is expressible in FO[TC] over embedded finite graphs. Consequently, we obtain a significantly simpler proof that RDPQ is in NL, by importing a classical result in embedded finite model theory.  
}

	\end{abstract}
	
	
	\section{Introduction}\label{sec:intro}
Knowledge graphs based on the property-graph paradigm have gained an increasing amount of popularity in the last
decades, with numerous application domains, including social networks, knowledge graphs, semantic web, biological and ecological databases, and more.
On the practical side, there has been rapid growth in the development of
commercial graph database systems, including 
systems such as Neo4j, Oracle, IBM, 
TigerGraph, and JanusGraph,
among others. 
This wealth of available systems has also motivated the recent and still ongoing \emph{standardization} effort of graph data models (in particular, \emph{property graphs}) and query languages (in particular, GQL), which has been undertaken by several working groups consisting of research leaders from academia and industry (see,~\eg,~\cite{LDBC:TR:TR-2021-01,gql}).

GQL \cite{GQLStandards} was developed to fill in the role of SQL (i.e., as a
yardstick query language) for property graphs. Unlike SQL whose complexity
landscape has been mapped for decades, the rigorous study of "GQL" is only just
taking shape. A first informal description of the language appeared in
\cite{DBLP:conf/sigmod/DeutschFGHLLLMM22}, followed by a formal treatment of its
pattern-matching layer in~\cite{gpc-pods}. The relational-algebra layer was
analysed soon after in~\cite{ICDT23}, and an expressiveness study subsequently
distilled a streamlined theoretical core~\cite{Gheerbrant2025GQL}. Consequently,
even the most basic issue --- that of \emph{data complexity} (query evaluation
with only the database as input) --- remains largely open. 
Hitherto, we know only that the complexity of enumerating query's outputs is in "PSpace" \cite{gpc-pods}, and we have an NP-hard lower bound easily derivable from~\cite{MartensNP23}. 
In this paper, we identify the precise data complexity of "GQL" for the first 
time.


\textbf{GQL on property graphs in a nutshell.} 
\begin{figure*}
  \begin{center}
    \includegraphics[width=.8\textwidth]{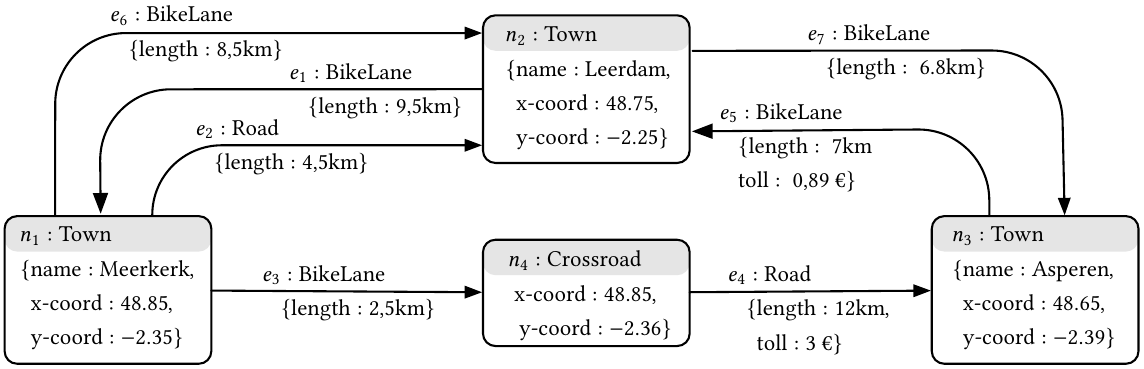}
    \caption{A "property graph" $G_{\label{fig:lanes-example}\mathtt{bike}}$ of bike lanes and road connections between towns and crossroads.}
  \end{center}
\end{figure*}
\emph{A property graph} \cite{pablo-survey} is a  graph whose nodes and edges are labeled, often to represent their type or role in the graph, and are enriched with additional attributes called properties. These properties are stored in a key-value format, enabling a flexible and detailed representation of both entities and their relationships. 
    Figure~\ref{fig:lanes-example} depicts a property graph $G_{\mathtt{bike}}$ of bike lanes and roads between towns and crossroads. It has two node types indicated by their labels---Town and Crossroad, and two edge types, namely Bike-lane and Road. Towns have properties like names, and coordinates specifying their exact location, Bike-lanes and Roads have length and possibly toll. 

GQL queries are essentially SQL queries over relations of nodes, edges, and properties, extracted using patterns that capture both data and graph topology (i.e., paths).
\begin{example}
To find all pairs $(x,y)$ of towns accessible by a bike-lane we  apply the pattern
\begin{equation*}\label{eq:bikelane}
\left(    (x\text{:Town}) 
\xrightarrow{\text{:BikeLane}}
(y\text{:Town}) \right)_{x,y}
\end{equation*}
on  $G_{\mathtt{bike}}$.
The output is a binary relation (with attributes $x,y$)
consisting of, among others, the pair $(n_2,n_1)$. 
To find pairs of places accessible by one or more bike-lanes, we can incorporate \emph{unbounded repetition} of edges:
\begin{equation}\label{eq:bikelane_rep}
R(x,y) \df \left(    (x\text{:Town}) 
\xrightarrow{\text{:BikeLane}}^{1..\infty}
(y\text{:Town}) \right)_{x,y}
\end{equation} 
Here, the output is the transitive closure of the previous output. 
We can refine matched patterns by appending a Boolean filter, e.g.,
returning every town
$y$ reachable \emph{by bicycle} from Leerdam with one or more bike-lanes:
\begin{equation*}
	\left(    (x\text{:Town}) 
	\xrightarrow{\text{:BikeLane}}^{1..\infty}
	(y) \langle x.\text{name} = \text{``Leerdam''} \rangle  \right)_{y}
\end{equation*} 
The condition $\langle x.\text{name} = \text{``Leerdam''} \rangle$ filters the matches to paths starting at Leerdam.
To prevent re-using the same bike-lane or revisiting the same place we prefix the pattern with an appropriate restrictor: we use \gpckw{simple} to forbid repeating a node, or \gpckw{trail} to forbid re-using an edge.
{For instance, suppose we want to organize a day trip from Meekerk to Leerdam using at least two bike lanes without passing through the same town twice. In this case, we would use the pattern
\begin{equation*} 
 \gpckw{simple}
	\left(    (x) 
	\xrightarrow{\text{:BikeLane}}^{2..\infty}
	(y) \langle x.\text{name} = \text{``Meekerk''} \wedge y.\text{name} = \text{``Leerdam''} \rangle  \right)
\end{equation*}
Here, the \gpckw{simple} restrictor enforces the non-repetition of nodes.
}

In GQL, SQL-like-queries are applied to the relations produced by pattern matching over the graph. To query a round-trip, out by bicycle on bike lanes and back by car on roads, we combine the pattern in (\ref{eq:bikelane_rep}), namely $R(x,y)$, with the  pattern 
\begin{equation*}
S(x',y')\df\,	\left(    (x'\text{:Town}) 
	\xrightarrow{\text{:Road}}^{1..\infty}
	(z'\text{:Town}) \right)_{x',y'}
\end{equation*}
 via
$
\pi_{x,y} \left(\sigma_{y=x' \land x=z'}\left(R(x,y) \times S(x',z')\right)\right)
$
that joins the two relations on equalities 
$y=x' \wedge x=z'$.
\end{example}

In this paper, we study the \emph{query evaluation problem} of GQL: given a "GQL" query $Q$, a property graph $G$, and a candidate tuple $\bar{t}$, determine whether 
$ \bar{t}$ belongs to the result of evaluating $Q$ over $ G $.
To study the complexity of the problem, we use \emph{data complexity} \cite{Vardi82}, which is the standard complexity measure for query evaluation problems. That is, since we typically deal with queries of ``small size'' and ``large'' databases, we consider the query component $Q$ as \emph{fixed} (i.e., not part of the input) and only measure the computational complexity in terms of the data (i.e. $G$). 
This regime reflects typical reasoning scenarios, where a fixed query or rule is evaluated against a large and evolving knowledge base, e.g., checking whether a large social network satisfies  ``six-degrees-of-separation''.

\partitle{Contributions} Our main result is summarized here:
\begin{theorem}
    The data complexity of "GQL" queries is "PNPLOG"-complete. The data complexity of "GQL" queries \emph{without restrictors} improves to "NL"-complete.
    \label{th:main}
\end{theorem}
That is, "GQL" query evaluation is representative of problems that can be solved in polynomial time with logarithmically many calls to "NP" oracles. Note that "PNPLOG" lies in the second-level of polynomial hierarchy (which is in "PSpace"), and subsumes the entire boolean hierarchy (which is a boolean closure of "NP"-complete problems). The class "PNPLOG" includes some natural problems related to optimization (e.g. \cite{Wagner87,SV00}) and logical reasoning (e.g. \cite{GMT09,Gottlob95}).
In the absence of path restrictors, the "NP"-hardness from \cite{MartensNP23} on querying simple paths or trails of even length  no longer applies. Indeed, we show that the complexity in this case drops to "NL"-complete.

Our proof technique for the upper bound complexity is via a relational embedding of "GQL" to extensions of first-order logic over relational structures, namely, either with transitive closure operator, or with existential second-order definable relational views. 
This relational viewpoint of property graphs yields several benefits, including applicability of techniques from \emph{embedded finite model theory} (see \cite[Chapter 13]{libkinbook} and \cite{benediktsurvey}) to derive the same data complexity of "GQL" in the presence of numeric data and arithmetic constraints (e.g. constraints from Tarski's theory of real field). In particular, the framework essentially allows \emph{unrestricted quantification} over an infinite domain such as the set of real numbers, which allows us to define interesting queries over spatial databases, e.g., there is a path from $x$ to $y$ that uses edges that lie on a straight-line (see \cite[Chapter13]{libkinbook}). We summarize this in the following corollary:
\begin{corollary}
    Let $\struct$ be a structure satisfying the property of Restricted Quantifier Collapse ("RQC"), in particular Linear Real Arithmetic ("LRA"), Linear Integer Arithmetic ("LIA"), or Real-ordered Fields ("ROF"). Then, the data complexity of "GQL" extended with unrestricted quantification over $\struct$ is "PNPLOG"-complete and, without restrictors, "NL"-complete.
\end{corollary}


\textbf{Organization.} 
Section~\ref{sec:prelim} recalls preliminaries on property graphs, "GQL", and general logic. 
Section~\ref{sec:gql} presents our main complexity results for GQL. 
In Section~\ref{sec:domspecific}, we extend the analysis to incorporate data values and domain-specific operators. Section \ref{sec:overview} highlights the benefits of analyzing graph query languages from a logical perspective.
We conclude in Section~\ref{sec:conc}. 
Some details are deferred to Supplementary Materials.

	\section{Preliminaries: "Property Graphs", "GQL" and Logic}\label{sec:prelim}
	
We follow the definitions of~\cite{Gheerbrant2025GQL} and recall the property graph data model, and the formal syntax and semantics of GQL. 
We also introduce some classical definitions from logic.


\partitle{Property Graphs}
\AP
 We assume pairwise disjoint countable infinite sets of ""nodes"" $\intro*\NodeSet$, ""(directed) edges"" $\intro*\EdgeSet$, 
""labels"" $\intro*\LabelSet$, ""properties"" $\intro*\PropertySet$ 
and ""keys"" $\intro*\KeySet$. A ""property graph"" $G$ is a tuple
$
	(N^G, E^G, 
	\srcf^G, \tgtf^G, \propf^G, \labelf^G )
$
where $N^G\subset \NodeSet, E^G\subset  \EdgeSet$,
are finite sets of nodes and edges identifiers, respectively, and 
$\srcf^G,\tgtf^G  :E^G\rightarrow N^G$ are functions that specify the source and target, respectively, of an edge,
$\propf^G: (E^G\cup N^G) \times \KeySet \rightarrow \PropertySet$ is a partial function that maps a given edge or node and a key to the corresponding property value, if defined, and
$\labelf^G\subset (E^G\cup N^G )\times {\LabelSet}$ specifies the labels attached to nodes and edges. We omit the superscrupt $G$ when it is clear from the context.

\subsection{GQL Syntax}
We follow the core "GQL" definition~\cite{Gheerbrant2025GQL}, and  set an infinite set of variables $\Vars$.
\AP
"Path patterns" $\intro*\gpcppat$ are defined via mutual recursion along with their \AP""free variables"" $\intro*\fv{(\gpcppat)}$.
\AP
Pattern matching is the key component of GQL. 
""Path patterns"" are defined recursively as follows:
\[
\begin{array}{lllll}
\gpcppat &\df& (x) &&\text{node pattern}\\[2pt]
&\mid& \overset{x}{\rightarrow} && \text{edge pattern}\\[2pt]
&\mid& \overset{x}{\leftarrow} &&\text{edge pattern}\\[2pt]
&\mid& \gpcppat_1 \,\gpcppat_2 &&\text{concatenation}\\[2pt]
&\mid& \gpcppat_1 + \gpcppat_2 &\text{if }\fv(\gpcppat_1)=\fv(\gpcppat_2)&\text{disjunction}\\[2pt]
&\mid& \gpcppat^{\,n..m} &&\text{repetition}\\[2pt]
&\mid& \gpcppat \langle \theta \rangle &&\text{filtering}
\end{array}
\]
where
\begin{itemize}
	\item $x \in \Vars$ and 
	$0 \le n \le m \le \infty$;
	\item variables $x$ in node patterns $(x)$,  and edge patterns
	$\overset{x}{\rightarrow}$ and $\overset{x}{\leftarrow}$ are optional,
	\item $\gpcppat\langle\theta \rangle$ is a conditional pattern, where
	conditions  $\theta$ are given by 
	$\theta \df \ x.k=x'.k' \mid
	\ell(x) \footnote{This condition checks whether $x$ is labeled with $\ell$. For convenience, we often use the shorthand $(x{:}\ell)$, which incorporates the label directly into the node pattern.}
    \mid 
	\theta \vee \theta
	\mid
	\theta \wedge \theta
	\mid \neg \theta$
	where $x,x'\in\Vars$ , $\ell\in\LabelSet$, and 
	$k,k'\in\KeySet$;
\end{itemize}

\noindent The free variables of path patterns are defined as follows:
\[\begin{array}{rl}
\fv{\left((x)\right)} = \fv({\overset{x}{\rightarrow}}) =  \fv({
		\overset{x}{\leftarrow}}) &\df \ \{x\}\\  \fv({\gpcppat_1 + \gpcppat_2} )
	& \df\  \fv({\gpcppat_1})
	\\ \fv({ \gpcppat_1\, \gpcppat_2 }) & \df\ \fv({\gpcppat_1})\cup\fv({ \gpcppat_2 })
	\\ \fv({ \gpcppat^{n..m} }) & \df \ \emptyset
	\\ \fv({\gpcppat\langle\theta \rangle}  )& \df\ \fv({\gpcppat})
\end{array} \]
\noindent
\AP We then define 
\emph{""output patterns""} as $\intro* \bar{\restrict} \gpcppat_{\return}$  where 
\begin{itemize}
	\item $\bar \restrict  \df [\gpckw{shortest}][\gpckw{simple}|\gpckw{trail}]
	$ is sequence of  \AP""restrictors"" such that  parts within squared brackets
	are optional, and 
	\item $\return$ is a (possibly empty) finite sequence of elements of the form $x$ and $x.k$ where $x\in \Vars$ and $k\in\KeySet$. 
\end{itemize}

Notice that while~\cite{Gheerbrant2025GQL} omitted  "restrictors" from the language, we include  them and give them a precise semantics that is consistent with the standard~\cite{ICDT23}, extending the core fragment and aligning the formal model more closely with the practical language.

\AP Finally, we associate every output pattern $\bar\restrict\gpcppat_{\return}$ with a relation symbol $R_{\bar\restrict\gpcppat_{\return}}$.  
""GQL queries"" \(\gpcq\) are then defined as the closure of these symbols under the standard operators of relational algebra.\footnote{
Notice that 
while \cite{Gheerbrant2025GQL} uses a linear form of relational algebra, we use the more convenient standard relational algebra as the equivalence already established in that work.
} 
Formally,
\[
\gpcq \df R_{\bar\restrict\gpcppat_{\return} }\mid \gpcq \cup \gpcq \mid \gpcq \times \gpcq \mid \gpcq \setminus \gpcq \mid \pi_{\bar x} \gpcq \mid \sigma_{\theta} \gpcq
\]
for every output pattern $\bar\restrict \gpcppat_{\return}$
where $\bar x $ is a vector of $\Vars$ and $\theta \df x = y \mid \theta \vee \theta \mid \theta \wedge \theta \mid \neg \theta$ with $x,y\in \Vars$.
The \(\schgpc({\gpcq})\) of a "GQL" query \(\gpcq\) is defined inductively. For the base case, 
$
\schgpc({R_{\bar\restrict\,\gpcppat_{\return}}})\df \return.
$
For composite queries the schema is defined in the standard way (see, e.g., \cite{AbiteboulHullVianu1995}).

\subsection{GQL Semantics}
\label{app:sem-GQL}

The semantics of output patterns is defined in terms of the semantics of path patterns. We therefore begin by defining the latter.

A \emph{path} \(p\) in a property graph \(G\)         is an alternating sequence of nodes and edges that start and end with a node, that is,
\[
n_0\, e_0\, n_1\, e_1\, \cdots\, e_{k-1}\, n_k
\]                 
where \(n_0, \ldots, n_k \in N\), and \(e_0, \ldots, e_{k-1} \in E \). Each edge \(e_i\) must connect the adjacent nodes, meaning either 
(1) $\srcf(e_i) = n_i${ and } $\tgtf(e_i) = n_{i+1}$ {or} (2) $\tgtf(e_i) = n_i$ { and } $\srcf(e_i) = n_{i+1}$.

The semantics 
$\semgpc{\gpcppat}^G$ 
of a path pattern $\gpcppat$ over a property graph \(G\) is defined as a set of pairs \((p, \mu)\), where:
\begin{itemize}
    \item \(p\) is a path in \(G\), and
    \item \(\mu : \VarSet \to N \cup E\) is a partial assignment that is defined on  \(\operatorname{dom}(\mu) \df \fv(\gpcppat)\).
\end{itemize}

We define $\src{p}\df n_0,\,\tgt{p} \df n_k$, and $\len{p}\df k$.
For two path $p_1, p_2$, we write $p_1 \parallel p_2$ to indicate that $\tgt{p_1} = \src{p_2}$.
In this case, $p_1 p_2$ is the concatenation of the paths. 
\AP
For partial mappings $\mu_1,\mu_2 :\VarSet \rightarrow N\cup E$, we use  
 $\mu_1\comp \mu_2$ to denote that for every $x\in\domf(\mu_1)\cap \domf(\mu_2) $, it holds that $\mu_1(x)=\mu_2(x)$ and define $(\mu_1 \join \mu_2)(x) \df\mu_1(x)$ if $x \in \domf(\mu_1)$, and 
$
\mu_2(x)$ otherwise.
We represent mappings $\mu$ as sets of elements $x\mapsto o$, where $\mu(x)=o$. (If $\dom(\mu)=\emptyset$ then we represent $\mu$ by $\emptyset$.)

\AP The semantics $\semgpc{\gpcppat}^G$ of "path patterns" 
on a "graph" $G$ is defined in Figure~\ref{fig:sem}.

\newcommand{\semfig}{%
\[
\begin{array}{r@{\hskip 3pt} l}
\multicolumn{2}{l}{\textbf{Path Patterns.}}\\
      \semgpc{(x)}^G \df& \left\{\left(({n}),\{x\mapsto n\} \right) \,\middle| \, 
          n\in N 
    \right\} \\
    \semgpc{\overset{x}{\rightarrow} }^G 
    \df& \left\{ (({n_1, e , n_2}), \{x\mapsto e\} ) \,\middle| \,
         e\in E,\,
          \srcf{(e)}=n_1,\, \tgtf{(e)}=n_2
    \right\}\\
    \semgpc{ \overset{x}{\leftarrow} }^G 
    \df& \left\{ (({n_1, e , n_2}), \{x\mapsto e\}  ) \,\middle| \,
         e\in E,\,
          \srcf{(e)}=n_2,\, \tgtf{(e)}=n_1
    \right\}
\\
 \semgpc{ \pi_1 \pi_2}^G \df& \left\{(p_1  p_2, \mu_1\join \mu_2) \,\middle| \begin{array}{l}
 (p_1,\mu_1)\in \semgpc{\pi_1}^G,\,
  (p_2,\mu_2)\in \semgpc{\pi_2}^G \\
 p_1 \parallel p_2,\,
 \mu_1\sim \mu_2,\,
 \end{array}\right\}  \\
    \semgpc{ \pi_1 +\pi_2}^G \df&    \semgpc{ \pi_1}^G \cup \semgpc{\pi_2}^G \\
     \semgpc{\gpcppat\langle\theta \rangle}^G \df& \left\{ (p,\mu) \,\middle|\,   (p,\mu)\in \semgpc{\gpcppat}^G,\, 
     \semgpc{\theta}^G_{\mu} = \top
     \right\}\\
      \semgpc{ \pi^{n..m}}^G \df& \cup_{i=n}^m \semgpc{\pi}^G_{i} \text{ where } \\
      \semgpc{\pi}_0^G \df& 
      \left\{(({n}),\emptyset)\, \middle|\, 
  n\in N \right\} \text{ and for } k>0:\\
\semgpc{\pi}^G_k \df &
\left\{ (p_1 \cdots p_k,\emptyset)\, \middle|\,
\begin{array}{@{}l@{}}
  (p_1,\mu_1) \in \semgpc{\pi}^G,\ldots,\,
 (p_k,\mu_k) \in \semgpc{\pi}^G
\end{array}
\right\}
\end{array} 
\]
\[\begin{array}{r@{\hskip 1pt}l}
\multicolumn{2}{l}{\textbf{Conditions.}}\\
\semgpc{\ell(x)}^G_{\mu } \df& \top  \text{ if } \labelf(\mu(x)) = \ell \\
\semgpc{x.k = x'.k'}^G_{\mu} \df &  
    \top \text{ if } \propf(\mu(x),k) =
  \propf(\mu(x'),k')
\\
\semgpc{\theta_1 \vee \theta_2}^G_{\mu} \df  &\semgpc{\theta_1 }^G_{\mu}\vee \semgpc{\theta_2}^G_{\mu} \\
\semgpc{\theta_1 \wedge \theta_2}^G_{\mu } \df   & \semgpc{\theta_1 }^G_{\mu}\wedge \semgpc{\theta_2}^G_{\mu}\\
\semgpc{\neg \theta}^G_{\mu} \df &  \neg \semgpc{ \theta}^G_{\mu}
\end{array}
\]
}

\begin{figure}[t]
    \centering
    \begin{minipage}{\linewidth}
        \begin{mdframed}
            \semfig
        \end{mdframed}
    \end{minipage}
    \caption{
        \label{fig:sem}
        Semantics of GQL's "Path Patterns" and "Conditions"}
\end{figure}
For the semantics of "output patterns"
$\restrict \gpcppat_{\return}$, we first define the intermediate semantics of path patterns preceded by restrictors, iterating through possible cases.
For $\restrict \in \{\gpckw{simple}, \gpckw{trail}\}$, we define
$
   \semgpc{\restrict \gpcppat}^G
\df \left\{ 
\mu \, \middle|\, 
(p,\mu)\in \semgpc{\gpcppat}^G,\rho(p) 
\right\}
$ where
$\restrict(p) = \top$ if and only if $p$ is a simple path or a trail, respectively.
Next, for $\tilde{\gpcppat} $ of the form $[\gpckw{simple} |\gpckw{trail}]  \gpcppat$ (with the "restrictors" being optional) we define 
\[
 \semgpc{\gpckw{shortest} \,\tilde{\gpcppat}}^G
\df \left\{ 
(p,\mu) \, \middle|\, 
(p,\mu)\in \semgpc{\tilde{\gpcppat}}^G ,\len{p} = m_p
\right\}
\]
where
$m_p$ is the minimal length among paths that match $\tilde{\gpcppat}$ with the same endpoints as $p$, or, more formally, 
\[
  m_p = \min \left\{ \len{p'}\,\middle|\,
            (p',\mu )\in \semgpc{\tilde{\gpcppat}}^G, \,
         \src{p'}=\src{p},\,
         \tgt{p'}=\tgt{p}
    \right\}
.\]

Let $\return$ be a (possibly empty) finite sequence of elements of the form $x$ and $x.k$ where $x\in \Vars$ and $k\in\KeySet$.
We say that a mapping $\mu$ is \emph{compatible} with $\Omega$ if for each  $x \in \Omega$, $\mu$ is defined on $x$, and for each $x.k \in \Omega$, $\mu$ is defined on $x$ and $\propf(\mu(x),k)$ is defined.
In this case, we define 
$\mu_{\return}: \return \rightarrow N \cup E \cup \PropertySet $ as the projection of
$\mu$ on $\return$:   

\[\mu_{\return} (\omega) \df
\begin{cases} 
\mu(x) & \text{if } \omega  =  x\in \Vars \\
\propf(\mu(x),k)  & \text{if }  \omega = x.k\, 
\end{cases}\]
Finally, we define $\semgpc{\bar \restrict { \gpcppat}_{\return}}^G
\df \left\{ 
\mu_{\return}\,\middle|\, 
(p,\mu)\in \semgpc{\bar \restrict{ \gpcppat}}^G
\right\}
$
Note that each semantics instance of output pattern forms a \emph{relation}: a set of partial mappings that share the same domain. Once this is fixed, the semantics of an entire "GQL" query is obtained exactly as in standard relational algebra (see, e.g., \cite{AbiteboulHullVianu1995}).
\subsection{First and Second-Order Logic}
We assume basic familiarity with mathematical logic (e.g.~\cite{libkinbook}). In the sequel, we deal only with finite relational structures.

\textbf{Relational structures.}
\AP
We fix disjoint countably infinite sets of ""relation names"" \(\Rel\),
""constant names"" \(\Const\), and ""variables"" \(\Vars\).
Every relation name \(R \in \Rel \) is associated with a positive integer, namely its ""arity"" \(\arity(R)\). We write $R^{(\arity({R}))}$ to make the arity explicit.
A ""{vocabulary}"" is a finite set \(\sigma\subseteq\Rel\).
\AP
A ""relation"" is a finite set of ""tuples"" in $\Const^{\arity(R)}$ .
A ""\(\sigma\)-{structure}"" \(\struct\) maps every
\(R^{(k)}\in\sigma\) to a finite relation
\(\interp{R}{\struct}\subseteq\Const^{k}\).
Its \emph{active domain} is
\(\adom(\struct)=\{c\in\Const: c\text{ occurs in }\interp{R}{\struct}\text{ for every }R\in\sigma\}\).
Property graphs can naturally be viewed as 
$\sigma$-structures over the vocabulary consisting of the relation names 
$N, E, \srcf, \tgtf, \propf, \labelf$ and interpreted accordingly. 

\textbf{First-order logic.}
\(\FO\) is defined in the usual way over  "\(\sigma\)-structures".
That is, terms $t$ are defined by $t\df x \mid c$ where $x\in \Vars$, $c\in\Const$ and formulas $\varphi \df
      t = t'
      \;\mid\;
      R(t_1,\dots,t_k)
      \;\mid\;
      \lnot\,\varphi
      \;\mid\;
      \varphi \wedge \varphi
      \;\mid\;
      \exists x\,\varphi $.
We use the standard $\forall, \rightarrow$ as syntactic sugar. 

For a formula $\varphi$,
we write $\varphi(\bar x)$ to explicitly indicate that $\bar x$ are its free variables.
An \AP""$\bar x$-valuation"" is a function $\nu$ mapping "variables" of $\bar x$ to elements of the "active domain" of a "relational structure" $\struct$.
We write \(\sem{\phi(\bar x)}{\nu}{\struct}\) for the truth value $\top$ or $\bot$ of
\(\phi\) under the "valuation" \(\nu:\bar x\to\adom(\struct)\). We denote by \(\sem{\phi(\bar x)}{}{\struct}\) the set of "valuations" $\nu$ that satisfy $\phi$ in $\struct$, that is, "valuations" for which  \(\sem{\phi(\bar x)}{\nu}{\struct}= \top\). 

\textbf{Transitive closure.}
First-order logic extended with the transitive-closure operator is denoted \FOTC. Transitive-closure formulas are of the form
$
\TC_{\bar u,\bar v}\bigl(\phi(\bar u,\bar v,\bar p)\bigr)(\bar x,\bar y)
$,
where $\bar p \bar x \bar y$ are the free variables. Such a formula is true in a structure \(\struct\) under a "valuation" \(\nu\) iff there exists a finite sequence of tuples
$
\bar d_{0}, \bar d_{1}, \dots, \bar d_{n} \;\in\;\adom(\struct)^{|\bar u|}
$
with \(\bar d_{0} = \nu(\bar x)\) and \(\bar d_{n} = \nu(\bar y)\), such that for every \(i < n\),
$\sem{\phi}{\mu_i}{\struct} = \top$, where $\mu_i$ is the "valuation" mapping $\bar u$, $\bar v$ and $\bar p$ to $\bar d_i$,  $\bar d_{i+1}$ and  $\nu(\bar p)$ respectively.

\textbf{Second-order logic.}
\SO\ (second-order logic) extends \FO~by allowing quantification not only over individual elements of the active domain, but also over relation symbols of arbitrary arity. Concretely,  \SO\ formulas may include subformulas of the form
$$
\forall R\,\bigl(\psi(R)\bigr)
\quad\text{or}\quad
\exists R\,\bigl(\psi(R)\bigr),
$$
where $R$ is a relation variable and $\psi(R)$ is an \FO\ formula in which $R$ may occur  as $R(x_1,\dots,x_k)$. 

The fragment \ESO\ consists of those \SO\ formulas in which all second-order quantifiers are existentially quantified. Note that negation in \ESO\ is permitted only within the first-order component and cannot apply to second-order quantifiers. 

\begin{toappendix}

We present the standard definitions of logic, along with their classical complexity bounds.

\textbf{Relational Structures.}
\AP
We set $\intro*\Rel$ and $\intro*\Const$ to be disjoint countably infinite sets of ""relation names"" and ""constant names"", respectively, and assume the existence of an ``arity'' function $\intro*\arity : \Rel $
\AP
We sometimes write $\intro*\withAr{R}{k}$  to denote 
that $\arity(R)=k$.
A ""relational vocabulary"" is a finite set $\sigma \subseteq \Rel$ of relation names. 
\AP
A (finite) ""$\sigma$-structure"" is a function $\struct$ which maps each "relation" $R \in \sigma$ to a (finite) set $\intro*\interp{R}{\struct} \subseteq \Const^{\arity(R)}$.
When $\sigma$ is clear from the context, we call $\struct$ a \reintro{relational structure}.
\AP
The ""active domain"" of $\struct$, namely $\intro*\adom(\struct)$,  is the set
$\set{c \in \Const : c \text{ appears in }\interp{Z}{\struct}, Z \in \sigma}$. 
We say that a structure is ""ordered"" if it is equipped with a designated binary relation that defines a total order on its "active domain". 

\textbf{First-Order Logic.}
We set  $\intro*\Vars$ to be a countably infinite set of ""variables"". 
\AP
Given a $k$-tuple of "variables" $\bar x$, a $\bar x$-""valuation"" is a function $\nu$ mapping "variables" of $\bar x$ to elements of the "active domain" of a "relational structure" $\struct$, usually simply denoted by a $k$-tuple from $\adom(\struct)^k$ (whose $i$-th element is $\nu(\bar x[i])$).
\AP
We define first-order logic $\intro*\FO$  over "relational vocabularies":
\begin{itemize}
    \item $t = t'$ is a "formula" for every $t,t' \in \Const \dcup \Vars$
    \item $\withAr{R}{k}(t_1, \dotsc, t_k)$ is a "formula" for every $t_1, \dotsc, t_k \in \Const \dcup \Vars$
    \item $\lnot \varphi$, $\exists x ~ \varphi$, $\varphi \wedge \psi$ are "formulas" if $\varphi, \psi$ are "formulas".
\end{itemize}
\AP
For any $\bar x$-"valuation" $\nu$ and "$\sigma$-structure" $\struct$, we define the semantics $\intro*\sem{\phi(\bar x)}{\nu}{\struct} \in \set{\top,\bot}$ and $\reintro*\sem{t}{\nu}{\struct} \in \Const$ for any "formula" or "term" whose free variables are in $\bar x$. 
The semantics of $\sem{\phi(\bar x)}{\nu}{\struct}$ and $\sem{t(\bar x)}{\nu}{\struct}$ are as expected (they can be found in \Cref{sec:FO-sem}).

\AP
 \textbf{Transitive closure.}
We extend $\FO$ with transitive closure to obtain  $\intro*\FOTC$ by allowing a new type of "formula" 
    $\intro*\TC_{\bar u,\bar v}(\phi(\bar u \bar v \bar p))(\bar x, \bar y) $ where
 $\phi$ is a formula, and $\bar u, \bar v, \bar x, \bar y$ are of the same dimension $k$.
 We refer to $\bar p$ as ``""parameters""''.
The semantics is such that  $\sem{\TC_{\bar u,\bar v}(\phi(\bar u \bar v \bar p))(\bar x, \bar y)}{\nu}{\struct} = \top$ if there exists a sequence of $k$-tuples of "active domain" elements $\bar d_0, \dotsc, \bar d_n$ such that 
   \begin{itemize}
       \item $\sem{\phi}{\bar u \bar v \bar p \mapsto \bar d_i \bar d_{i+1} \nu(\bar p)}{\struct} = \top$ for every $i$, 
       \item $\bar d_0 = \nu(\bar x)$, and 
       \item$\bar d_n = \nu(\bar y)$.
   \end{itemize} 

   \AP
    We define $\sem{\phi}{}{\struct}$ as the set of all assignments $\nu$ such that $\sem{\phi}{\nu}{\struct} = \top$.
 We refer to the fragment of $\FOTC$ in which every $\TC$ subformula appears under an even number of negations by $\intro*\FOpTC$. 
\begin{proposition}\label{prop:lib}\cite[Propositions~10.22 \& 10.23]{libkinbook}
    In terms of expressive power, on "ordered structures", $\FOTC$ is equivalent to $\FOpTC$, and it "subsumes" all "NL" properties.
\end{proposition}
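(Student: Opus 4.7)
The plan is to establish the two claims---expressive equivalence of $\FOTC$ and $\FOpTC$ on "ordered structures", and the "captures" relationship with "NL"---via the classical Immerman program, structured around three ingredients.

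First, I would prove the upper bound that every $\FOTC$ formula is evaluable in "NL" by structural induction. The pure $\FO$ fragment lies in $\ACzero$ and hence in "NL", and "NL" is closed under Boolean combinations (invoking Immerman--Szelepcs\'enyi for the negation case). For $[\TC_{\bar u, \bar v}\phi](\bar x, \bar y)$, evaluation reduces to reachability in the graph whose edge relation is defined by $\phi$, which is the canonical "NL"-complete problem; logspace composition together with the inductive hypothesis then handles nested transitive closures and "parameters".

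Second, for the converse direction---that every "NL" property is $\FOpTC$-definable on "ordered structures"---I would encode the computation of a nondeterministic logspace Turing machine $M$. Since configurations of $M$ on an input of size $n$ occupy $O(\log n)$ bits, and an "ordered structure" provides a total order on its "active domain", each configuration is represented by a fixed-arity tuple of "active domain" elements, with the order letting one define successor, bit extraction, and arithmetic $\FO$-uniformly. The one-step transition relation is then $\FO$-definable from the input encoding, so acceptance becomes a single positive $\TC$ assertion of reachability from the initial to an accepting configuration.

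Third, to obtain $\FOTC \subseteq \FOpTC$ expressively, I would leverage the previous two ingredients. Any $\FOTC$ subformula defines an "NL" relation in its free "variables" (treated as "parameters"); since "NL" coincides with its complement class by Immerman--Szelepcs\'enyi, the complement of such a subformula is also "NL" and hence $\FOpTC$-definable. Substituting these positive equivalents in place of each negated $\TC$ subformula, processed from innermost outward, yields an equivalent $\FOpTC$ formula. The principal obstacle I anticipate is making Immerman--Szelepcs\'enyi uniform in the "parameters" $\bar p$: one must realize the inductive counting certificate---namely the number of configurations reachable within $k$ steps---as a positive transitive closure over an explicit, $\FO$-definable configuration-plus-counter graph parameterized by $\bar p$, rather than invoking the complexity-theoretic statement as a black box.
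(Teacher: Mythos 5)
Your proposal is correct and follows essentially the same route as the source the paper cites for this statement (the paper offers no proof of its own, deferring to Libkin's textbook): the $\FOTC$-to-NL upper bound by reduction to graph reachability, Immerman's encoding of logspace configurations as fixed-arity tuples over the order to get NL into $\FOpTC$, and the elimination of negated $\TC$ subformulas by internalizing the Immerman--Szelepcs\'enyi inductive-counting argument as a positive transitive closure, uniformly in the parameters. You have also correctly isolated the one genuinely delicate step, namely that the counting certificate must be realized logically rather than invoked as a complexity-theoretic black box.
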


\textbf{Second-Order Logic.}
\label{sec:SO}
\AP
""Second-order Logic"" ($\intro*\SO$) is another extension of $\FO$, this time with second-order quantifiers over the "active domain". Concretely, we have now another disjoint set $\intro*\RVars$ of ""relation variables"" with arity. We abuse notation and use $\withAr{R}{k}$ and $\arity(R)$ also for "relation variables" $R$.
We add two new kinds of "formulas". 
The first one of the form
$R(t_1,\dotsc, t_{\arity(R)})$ 
where $R$ is a "relation variable" and  every $t_i$ is a "term".
\AP
The second one of the form
$\intro*\exists R ~ \phi(R,\bar S,\bar x),$
where $\phi$ is a "formula", $R$ is a "relation variable", and $\bar S$ and $\bar x$ are tuples of "relation variables" and "variables", respectively.
For the semantics, we now have that "relation variables" assignments are functions from $\RVars$ to \emph{sets} of elements of the "active domain" of $\struct$.
We extend the semantics $\sem{\phi}{\nu,\rho}{\struct}$, where $\nu$ is a "variable" assignment and $\rho$ is a "relation variable" assignment in the expected way.
In particular,
$\sem{\exists R ~ \phi(R,\bar S,\bar x)}{\nu,\rho}{\struct} = \top$ if for some set $A \subseteq (\adom(\struct))^{\arity(R)}$ we have $\sem{\phi(R,\bar S,\bar x)}{\nu,\rho[R \mapsto A]}{\struct} = \top$.
We have $\sem{R(t_1, \dotsc, t_{\arity(R)})}{\nu,\rho}{\struct} = \top$ for $R \in \RVars$ if $(\sem{t_1}{\nu,\rho}{\struct}, \dotsc, \sem{t_k}{\nu,\rho}{\struct}) \in \rho(R)$. $\sem{\phi}{}{\struct}$ is defined analogously.

\AP
The ""Existential Second-order"" fragment $\intro*\ESO$ of $\SO$ is the set of "formulas" built up from first-order formulas via the positive Boolean operators, existential (first-order) quantification, and quantifiers of the form $\exists R_1, \dotsc, R_n ~ \phi$.
 \begin{proposition}
     In terms of expressive power, on ordered structures, $\FOTC$ is included in $\ESO$,
     and $\FOTC$ is equivalent to $\FOpTC$ (and, in turn, to "NL").
 \end{proposition}
The following connection can be easily shown.

\begin{propositionrep}\label{prop:fotcineso}    
    In terms of expressive power, $\FOpTC$ is "subsumed" in $\ESO$.
\end{propositionrep}
This, along with \Cref{prop:lib}, enables us to conclude that on "ordered structures", $\FOTC$ is subsumed in $\ESO$.

\begin{proof}
    it suffices to replace each occurrence of 
    \[[\TC_{\bar u,\bar v}(\phi(\bar u \bar v \bar p))](\bar x, \bar y)\]
    with
    \[
        \exists \withAr{R}{2k} ~ R(\bar x \bar y) \land \textit{lin}(R) \land \forall \bar u \bar v ~ (succ(R,\bar u, \bar v) \rightarrow \phi(\bar u \bar v \bar p))
    \]
    where $\textit{lin}(R)$ expresses that $R$ is a strict linear order,\footnote{That is, it defines a linear order $<$ on the domain $(\adom(\struct))^k$, where $\bar a < \bar b$ if  $\bar a \bar b \in R$.} and $succ(R, \bar u, \bar v)$ expresses that $\bar v$ is the successor of $\bar u$ in the linear order.
\end{proof}

\subsection{Complexity of First and Second-Order Logic}

We are interested in the \AP""data complexity"" of evaluation ("ie", the complexity where the query is considered to be fixed). The 
 ""evaluation problem"" for a logic $\+L$  is the problem of determining, given a formula $\phi(\bar x)$ in $\+L$, a "structure" $\struct$ and a "valuation" $\nu$, 
 whether $\sem{\phi(\bar x)}{\nu}{\struct} = \top$. The following proposition is well-known (e.g. see \cite{libkinbook}).

 \begin{proposition}
    \label{prop:logics-complexities}
    \hfill
    \begin{enumerate}
        \item The "evaluation problem" for $\FOTC$ is "NL"-complete in data complexity.
        \item The "evaluation problem" for $\ESO$ is "NP"-complete in data complexity.
        \item The "evaluation problem" for $\SO$ is "PH"-complete in data complexity.
    \end{enumerate}
 \end{proposition}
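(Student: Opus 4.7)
The plan is to verify each complexity bound separately, leaning heavily on classical descriptive complexity arguments; the proposition is essentially a repackaging of Immerman's theorem for $\FOTC$, Fagin's theorem for $\ESO$, and Stockmeyer's theorem for $\SO$, so the proof I would write is a concise outline of those arguments specialised to the data-complexity setting (formula fixed, structure varying).

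For the upper bounds, I would argue by induction on the formula (which is fixed, so the induction is on a constant-size object). The $\FO$ base fragment is evaluable in \textsf{AC}$^0$, hence in $\NL$. A subformula $[\TC_{\bar u,\bar v}\phi](\bar x,\bar y)$ is evaluated by nondeterministically guessing a sequence of $k$-tuples from $\adom(\struct)$ and verifying $\phi$ at each step; reusing the same log space for each successive tuple shows the whole computation is in $\NL$, and closure of $\NL$ under composition handles nested $\TC$ operators and Boolean combinations (with \textsf{Savitch}'s Theorem and $\NL=\text{co}\NL$ taking care of negations). For $\ESO$, I would nondeterministically guess a polynomially-sized interpretation $\rho$ for each bound relation variable $R$ (polynomial because $|A|\le|\adom(\struct)|^{\arity(R)}$), and then evaluate the first-order body in $\textsf{AC}^0\subseteq\P$. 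For full $\SO$, I would observe that the alternation of $k$ blocks of second-order quantifiers precisely corresponds to $k$ alternations of an $\NP$-oracle computation, giving containment in $\Sigma^p_k$ and so in $\PH$.

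For the lower bounds, I would exhibit a single formula in each logic whose evaluation problem is already hard for the relevant class. For $\FOTC$-hardness, I would take the formula $\psi(x,y):=[\TC_{u,v}E(u,v)](x,y)$ over the vocabulary of a single binary relation $E$, which is directly the $s$--$t$ reachability problem in directed graphs and thus $\NL$-hard by Savitch-style reductions; this matches the $\NL$ upper bound, completing item (1). Alternatively, one can simply invoke \Cref{prop:lib}, which already states that $\FOTC$ captures $\NL$ on ordered structures, and note that capture implies the evaluation problem is $\NL$-complete. For $\ESO$, I would invoke Fagin's theorem: every $\NP$ property is expressible by an $\ESO$ sentence, so fixing an $\ESO$-encoding of, say, 3\textsc{Sat} gives an $\NP$-hard evaluation problem. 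For $\SO$, I would similarly invoke Stockmeyer's theorem and fix a formula encoding a $\Sigma^p_k$-complete problem (e.g., $\Sigma_k$-\textsc{QBF}) for each $k$, which suffices to witness $\PH$-hardness of the evaluation problem across the family of $\SO$ formulas.

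The main subtlety I anticipate is the role of order. The $\NL$ capture result in \Cref{prop:lib} requires an ordering on the structure, whereas the statement of \Cref{prop:logics-complexities} does not explicitly restrict attention to "ordered structures". This is not a real obstruction: for the evaluation problem, the lower bound formulas I would choose refer explicitly to a built-in $E$ (or order) relation supplied in the input structure, so hardness is exhibited without needing a global ordering assumption; and the upper bounds (a direct simulation rather than a capture argument) do not need order either. I would therefore spend one line clarifying this distinction between capture and plain evaluation before quoting Immerman, Fagin and Stockmeyer to close off the three items.
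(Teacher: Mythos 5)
The paper does not prove this proposition at all: it simply labels it as well-known and cites Libkin's textbook. Your sketch is the standard argument that the citation points to (direct nondeterministic simulation of $\TC$ for the upper bound, guess-and-check for $\ESO$, quantifier alternation for $\SO$, and reachability / Fagin / Stockmeyer for the lower bounds), and it is essentially correct, including the two points that actually require care: that $\nlogspace$-completeness of evaluation does not need the order hypothesis of \Cref{prop:lib} (capture and evaluation are different claims), and that ``$\ph$-complete'' must be read levelwise, since $\ph$ has no complete problems unless it collapses. Two small misattributions: handling $\TC$ under negation rests on the Immerman--Szelepcs\'enyi theorem ($\nlogspace = \text{co-}\nlogspace$), not on Savitch's theorem (which only yields deterministic $O(\log^2 n)$ space and is irrelevant here); and directed $s$--$t$ reachability is $\nlogspace$-hard via the configuration-graph reduction from an arbitrary $\nlogspace$ machine, again not via Savitch. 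Neither slip affects the validity of the argument.
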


\AP
In general, we shall say that a logic $\+L$ ""subsumes"" a query language $\+Q$ (resp.\ a set of properties $\+P$), if for every query $q \in \+Q$ (resp.\ property $p \in \+P$) there is a sentence $\phi \in \+L$ (\ie, a formula with no free variables) which is equivalent, in the sense that for every "relational structure" $\struct$, we have $\sem{\phi}{}{\struct} \neq$ "iff" $q$ yields a non-empty answer on $\struct$ (resp.\ $\struct$ verifies property $p$).

\end{toappendix}

\begin{toappendix}

\subsection{First-order Logic Semantics}
\label{sec:FO-sem}
The full semantics of first-order logic is defined as follows:
\begin{itemize}
    \item $\sem{\exists x ~ \phi(x,\bar y)}{\bar y \mapsto \bar b}{\struct} = \top$ if for some $a \in \adom(\struct)$ we have $\sem{\phi(x,\bar y)}{(\bar y \mapsto \bar b) \cup (x \mapsto a)}{\struct} = \top$,
    \item $\sem{t = t'}{\nu}{\struct} = \top$ if $\sem{t}{\nu}{\struct} = \sem{t'}{\nu}{\struct}$,
    \item $\sem{\withAr{R}{k}(t_1, \dotsc, t_k)}{\nu}{\struct} = \top$ if $(\sem{t_1}{\nu}{\struct}, \dotsc, \sem{t_k}{\nu}{\struct}) \in \interp{R}{\struct}$,
    \item $\sem{c}{\nu}{\struct} = c$ for every $c \in \Const$,
    \item $\sem{x}{\nu}{\struct} = \nu(x)$ for every $x \in \Vars$,
    \item $\sem{\phi \land \psi}{\nu}{\struct} = \top$ if $\sem{\phi}{\nu}{\struct} = \top$ and $\sem{\psi}{\nu}{\struct} = \top$
    \item $\sem{\lnot \phi}{\nu}{\struct} = \top$ if $\sem{\phi}{\nu}{\struct} \neq \top$
\end{itemize}

\end{toappendix}

	\section{Complexity Bounds for "GQL" Queries}\label{sec:gql}
	
We now use classical logics to capture "GQL" with and without restrictors, turning syntactic translations into complexity bounds.


\OMIT{
\begin{mdframed}
\noindent\textbf{ Query Evaluation for GQL}\\[4pt]
\textbf{Input:} A "GQL" query $\gpcq$, a graph $G$, a candidate tuple $\bar t$.\\[3pt]
\textbf{Output:} \textsc{Yes} if  $\bar t \in \semgpc{\gpcq}^{G}$, and \textsc{No} otherwise.
\end{mdframed}

In what follows, we study the \emph{data complexity} of this problem, \ie the complexity of answering it when $\gpcq$ is regarded fixed, and $G$ as input.
}

\subsection{Restrictor-Free GQL Queries and \FOTC}

\AP We define ""restrictor-free"" "GQL queries" as queries in which path patterns are not preceded by "restrictors".
\AP We call a "GQL query" ""restrictor-free"" when no "output pattern" is prefixed by  "restrictors".

\begin{lemma}\label{thm:gql_fotc}
	For every "restrictor-free" "GQL query" $\gpcq$ 
	there is an $\FOTC$ formula $\phi_{\gpcq}(\bar x)$ where $\bar x = \schgpc({\gpcq})$ such that  
	$\semgpc{\gpcq}^{\gdb} =
	\sem{\phi_{\gpcq}(\bar x)}{}{G}
	$
	for every "graph" $G$.
\end{lemma}
\begin{proof}[Proof Sketch]
The proof translates a \emph{restrictor-free} "GQL" query  into an
$\FOTC$ formula whose answer relation coincides with the query
result. The construction involves three main steps: 
\smallskip

\noindent\textbf{(1) Path patterns.}
For every path pattern $\pi$ we build an $\FOTC$ formula
$\phi_\pi(s,t,\bar x)$ whose free variables are
the start node $s$ and end node $t$ of the matched path, and the pattern variables
$\bar x=\fv(\pi)$.  
The translation is recursive:

\begin{itemize}
	\item \emph{Atomic patterns} use predicates
	such as $N(x)$ for node patterns, and combination of $E(e)$, $\srcf(e,s)$, $\tgtf(e,t)$ for edge patterns, depending on their direction.
	\item \emph{Union and concatenation.}  
	$\pi_1+\pi_2$ translates to disjunction, and for
	$\pi_1\pi_2$ we use an existentially quantified concatenation node.
	\item \emph{Filters.}  A condition
	$\pi{\langle\theta\rangle}$ becomes
	$\phi_{\pi}\wedge\phi_\theta$, where each atomic condition
	(\(x{:}\ell\),\; $x.a=y.b$, etc.) is rendered, in turn, into FO.
	\item \emph{Kleene bounds.}
	Bounded repetition $\pi^{k..m}$ unrolls into a finite disjunction of "FO" formula;
	unbounded repetition $\pi^{0..\infty}$ is captured with a
transitive closure operator applied to $\phi_{\pi}$.
\end{itemize}

\smallskip
\noindent\textbf{(2) Output patterns.}
An output pattern $\pi_{\return}$ returns a tuple of
node/edge IDs and property values.
To obtain the output pattern formula, we take $\phi_\pi$ and conjoin, for every attribute in the return list $\Omega$, a first-order equality that identifies that attribute with the corresponding witness variable already introduced by $\phi_\pi$.

\smallskip
\noindent\textbf{(3) Full queries.}
A restrictor-free "GQL" query is just a relational-algebra expression
over the auxiliary relations defined in step (2).  Because every RA
operator is FO-definable, an easy induction 
pushes all definitions inside and yields a single $\FOTC$ sentence
$\phi_{\gpcq}(\bar x)$ whose free variables $\bar x=\schgpc(\gpcq)$.
\end{proof}

\begin{toappendix}
\subsection{Proof of \Cref{thm:gql_fotc}}
\label{app:proof-thm:gql_fotc}

We start with proving the following lemmas:
\begin{lemma}\label{lem:gpcandfotc}
For every "path pattern" $\gpcppat$ there is an $\FOTC$~formula $\phi_{\gpcppat}(s,t,\bar x)$ such that $\bar x = \fv{(\pi)}$, $s,t\not \in \bar x$, and for every "graph" $G$ 
the following hold:
\begin{itemize}
\item 
If $\exists p,\mu: (p,\mu) \in \semgpc{\pi}^{G}$ 
then 
$\sem{\phi_{\gpcppat}(s,t,\bar x)}{st\bar x\mapsto \src{p} \tgt{p} \mu(\bar x) }{G} = \top$.
\item 
If $\sem{\phi_{\gpcppat}(s,t,\bar x)}{st\bar x\mapsto y_1 y_2 \bar z }{G} = \top$ then $\exists p,\mu: (p,\mu) \in \semgpc{\pi}^{G}$ such that $\srcf(p)=y_1$, $\tgtf{(p)}=y_2$ and $\mu(\bar x) = \bar z$.
\end{itemize}
\end{lemma}
\begin{proof}
The proof of this lemma is based on the following recursive translation of $\gpcppat$ to formulas $\phi_{\pi}$.
	
\paragraph{Base cases}

If $\gpcppat \df (x)$ then
\[\phi_{\pi}(s,t,x) \df  N(x)  \wedge s=x \wedge t=x\]
If $\gpcppat \df (\,)$ then
\[\phi_{\pi}(s,t) \df  N(s) \wedge s=t\]
If $\pi \df \overset{x}{\rightarrow}$ then
\[
\phi_{\pi}(s,t,x) \df  E(x) \wedge N(s) \wedge N(t) \wedge \srcf(x,s) \wedge \tgtf(x,t) 
\]
If $\pi \df \overset{}{\rightarrow}$ then
\[
\phi_{\pi}(s,t) \df \exists x:\ \left( E(x) \wedge N(s) \wedge N(t) \wedge \srcf(x,s) \wedge \tgtf(x,t) \right)
\]
If $\pi \df  \overset{x}{\leftarrow}$ then
\[
\phi_{\pi}(s,t,x) \df E(x) \wedge N(s) \wedge N(t) \wedge \srcf(x,t) \wedge \tgtf(x,s) 
\]
If $\pi \df  \overset{}{\leftarrow}$ then
\[
\phi_{\pi}(s,t) \df \exists x:\,\left(E(x) \wedge N(s) \wedge N(t) \wedge \srcf(x,t) \wedge \tgtf(x,s) \right)
\]

\paragraph{Induction step}
If
$\pi = \pi_1 + \pi_2$ then since $\fv({\pi_1})=\fv({\pi_2})$ we set \[\phi_{\pi_1 + \pi_2}(s,t,\bar x) \df \phi_{\pi_1}(s,t,\bar x) \vee \phi_{\pi_2}(s,t,\bar x)\]
If
$\pi = \pi_1 \pi_2$ then 
\[
\phi_{\pi}(s,t, \bar z, \bar x, \bar y) \df \exists n: (\NodeRel(n) \wedge \phi_{\pi_1}(s,n,\bar z,\bar x)\wedge \phi_{\pi_2}(n,t,\bar z,\bar y))
\]
where $\bar z = \fv(\pi_1) \cap \fv(\pi_2) $ are assumed "wlog"~to appear it their stated positions (if it is not the case they can be modified appropriately).

\noindent
Before translating $\pi = \pi'_{\langle \theta \rangle}$, we discuss the condition $\theta$.
We define $\fv(\theta)$ recursively as follows:
\begin{align*}
    \fv (\ell(x))&\df \{ x\}\\
    \fv(x.a=y.b) &\df \{x,y\} \\
    \fv(x.a=c) &\df \{x\} \\
    \fv(\theta_1 \vee \theta_2) &\df \fv(\theta_1)\cup  \fv(\theta_2) \\
    \fv(\theta_1 \wedge \theta_2) &\df \fv(\theta_1)\cup  \fv(\theta_2) \\
    \fv(\neg \theta) &\df  \fv(\theta)
\end{align*}

If $\pi = \pi'_{\langle \theta \rangle}$  then
\[\phi_{\pi}(s,t,\bar x) \df \phi_{\pi'}(s,t,\bar x) \wedge \phi_{\theta}(\fv(\theta))
\]
where $\phi_{\theta}(\fv(\theta))$ is defined as follows:
\begin{itemize}
\item If $\theta = \ell(x)$, then $\phi_{\theta}(x) \;\df\; \labelf(x,\ell)$.
\item If $\theta = x.a = y.b$, then $\phi_{\theta}(x,y) \;\df\; \exists p\,\bigl(
\propf(x,a,p) \wedge \propf(y,b,p)
\bigr).
$
\item If $\theta = \theta_1 \circ \theta_2$ with
$\circ \in \{\vee,\wedge\}$ and $\bar z$ the variables common
to $\fv(\theta_1)$ and $\fv(\theta_2)$, then
$\phi_{\theta}(\bar z,\bar x,\bar y) \;\df\;
\phi_{\theta_1}(\bar z,\bar x) \;\circ\;
\phi_{\theta_2}(\bar z,\bar y).$
\item If $\theta = \neg \theta'$, then
$
\phi_{\theta}(\bar x) \;\df\; \neg \phi_{\theta'}(\bar x).
$
\end{itemize}

Notice that the semantics of $ \pi'_{\langle \theta \rangle}$ is only defined when $\fv(\theta) \subseteq \fv(\pi')$. Therefore, the free variables of $\phi_{\theta}$ are a subset of the free variables of $\phi_{\pi'}$; hence, the above translation is well-defined. 
	
If
$\pi = \pi'^{k..k}$ with $k<\infty$ is translated to 
\[
\phi_{\pi}(s,t) = \exists s_1, \cdots,s_k: \bigwedge_{i=1}^{k-1}\phi'(s_i,s_{i+1}) \wedge \phi'(s_k,t) \wedge s=s_1
\]

The pattern 
$\pi = \pi'^{k..m}$ with $0\le k\le m <\infty$ can be viewed as syntactic sugar to $\pi'^{k}+ \cdots +\pi'^{m} $ which is translated according to the previous item and the $+$ translation.
	
The pattern  $\pi = \pi'^{0..\infty}$  is translated to 
\[
\phi_{\pi}(s,t) \df [\TC_{u,v}(
\phi_{\pi'}(u,v))](s,t)
\]
(Notice that since we are dealing with the "1NF" fragment of "GQL", it holds that  $\schgpc(\pi')= \emptyset$, hence, $\schgpc(\phi_{\pi'}) = \{s,t\}$.)
	
The pattern  $\pi = \pi'^{k..\infty}$  can be viewed as syntactic sugar to $\pi'^{k} \pi'^{0..\infty} $ which is translated according to the translation of previous items and concatenation. 
	
With the above construction at hand, it is straightforward that the condition of the lemma holds.
\end{proof}

We now move to output patterns, and in particular those without restrictors. 
\begin{lemma}\label{lem:outptpattern_fotc}
For every "output pattern" $\gpcppat_{\return}$ there is an $\FOTC$~formula $\phi_{\gpcppat_{\return}}(\bar x)$ such that $\bar x = \fv{(\gpcppat)}$ and for every graph $G$ the following hold:
\begin{itemize}
\item 
If
$ \mu \in \semgpc{\gpcppat_{\return}}^{G}$ then there is $\mu'$ such that  $\sem{\phi_{\gpcppat_{\return}}(\bar x)}{\bar x \mapsto \mu'(\bar x)}{G} = \top$
\item 
If
there exists $\mu$ for which $\sem{\phi_{\gpcppat_{\return}}(\bar x)}{\bar x \mapsto \mu(\bar x)}{G} = \top$ then there is
$\mu' \in \semgpc{\gpcppat_{\return}}^{G}$.
\end{itemize}
\end{lemma}
\begin{proof}
We define $\phi_{\gpcppat_{\return}}(\bar x)$ based on $\phi_{\gpcppat}(y_1,\ldots,y_m)$
whose existence holds from Lemma~\ref{lem:gpcandfotc}.
Let us denote $\Omega \df \omega_1 ,\ldots ,\omega_m$.
Assume wlog that 
$\omega_1, \ldots \omega_i$ are $y_1,\ldots, y_i$, respectively, and that 
$\omega_{i+1}, \ldots \omega_m$ are $y_{i+1}.k_{i+1},\ldots, y_m.k_{m}$, respectively. 
We define 
 $\phi_{\gpcppat_{\return}}(x_1,\ldots, x_m) \df\exists y_1,\ldots,y_m:\,\left(  \phi_{\gpcppat}(y_1,\ldots,y_m) \wedge
\bigwedge_{j={1}}^i x_j=y_j \wedge
 \bigwedge_{j={i+1}}^m \propf(y_{j}.k_{j}) = x_j 
 \right)$.
 Following this construction, it is straightforward to show that both claims hold.
\end{proof}

\begin{lemma}\label{lem:gpcqtofo}
	For every "restrictor-free" "GQL  query" $\gpcq$ there is an \FOTC formula $\phi_{\gpcq}(\bar x)$ such that $\bar x = \sch{\gpcq}$ and for every "graph" $G$ the following holds:
	\[  \semgpc{\gpcq}^{G} =
	\sem{\phi_{\gpcq}(\bar x)}{}{G} 
	\]
\end{lemma}
\begin{proof}
The proof is done by induction on the structure of $\gpcq$, with the base case given by Lemma~\ref{lem:outptpattern_fotc}. The induction step is proved similarly to known results on embedding relational algebra within first order logic (see, \eg, \cite{AbiteboulHullVianu1995}). 
\end{proof}
This lemma allows us to conclude~\Cref{thm:gql_fotc}.
\end{toappendix}

\begin{example}\label{ex:fotc}
   Query~(\ref{eq:bikelane_rep}) from~\Cref{sec:intro} can be expressed as
   $ \phi_1(x,y)\df
   [\TC(\phi(u,v))](x,y)
   $ where 
$
       \phi(u,v) \df \exists e\, \Big(\ E(e) \land \srcf(e,u) \land \tgtf(e,v) \land \labelf(e,\text{``BikeLane''})\Big).
$
Here, the relation names $E$, $\srcf$, $\tgtf$, and $\labelf$ constitute the vocabulary of the graph.
       \end{example}

Combining~\Cref{thm:gql_fotc} with the known fact that the "evaluation problem" for $\FOTC$ is "NL"-complete in data complexity (\cite{libkinbook} gives an "NL"\ upper bound. A matching lower bound via a reduction from graph reachability yields the following tight bound).
\begin{theorem}\label{cor:complexityGQL}
	The data complexity of the "evaluation problem@@GQL" for "restrictor-free" "GQL queries" is in "NL"-complete. 
\end{theorem}

Having settled the complexity of restrictor-free queries, we now turn to queries with restrictors.

\subsection{GQL Queries and \FOESO}

We define \FOESO~as first-order logic extended with \ESO-based first-order views. Specifically, we call an \ESO~formula whose free variables are first-order variables an \emph{ESO FO-view}, and define \FOESO~as the set of first-order formulas that may include such views as sub-formulas.

\begin{lemma}\label{thm:gql_foeso}
    For every "GQL" query $\gpcq$ (with "restrictors") there is an \FOESO~formula $\phi_{\gpcq}(\bar x)$ with $\bar x = \schgpc{(\gpcq)}$ such that $ \semgpc{\gpcq}^{G} = \emptyset$ if and only if $
\sem{\phi_{\gpcq}(\bar x)}{}{\gdb} = \emptyset
    $
for every "graph" $G$.
\end{lemma}
\begin{toappendix}

\subsection{Proof of \Cref{thm:gql_foeso}}
\label{app:thm:GQLtoSO}

We show the following:
\begin{lemma}\label{lem:simple2eso}
For every "path pattern" $\gpcppat$ there is an $\ESO$~formula $\phi^{\gpckw{simple}}_{\gpcppat}(\bar x)$ such that $\bar x = \fv{(\gpcppat)}$ are first order variables, and for every graph $G$ and partial mapping $\mu$ the following are equivalent:
	\begin{itemize}
		\item 
		$\exists p:  ( p,\mu) \in \semgpc{\gpckw{simple}\, \gpcppat}^{G}$ 
		\item 
		$\sem{\phi^{\gpckw{simple}}_{\gpcppat}(\bar x)}{\mu(\bar x)}{G} = \top$
	\end{itemize}
\end{lemma}
\begin{proof}
To incorporate the "restrictor" $\gpckw{simple}$, we need to quantify existentially over a second order variable that restricts the matched paths. 

More formally, we can encode paths $(n_1,e_1,\ldots, n_k,e_k,n_{k+1})$ with a ternary relation $R^{(3)}$ by instantiating it with $R^{(3)}(n_j,e_j,n_{j+1})$ for every $1\le j\le k$.
We encode a ternary edge relation as
$
R(x,e,y)\;,
$
and define the helper predicates
$$
\begin{aligned}
\textsf{edge}(e)   &\;:=\;\exists x\,y\,R(x,e,y),\\[2pt]
\textsf{node}(x)   &\;:=\;\exists e\,y\,\bigl(R(x,e,y)\lor R(y,e,x)\bigr),\\[2pt]
\textsf{source}(x) &\;:=\;\textsf{node}(x)\land\neg\exists e\,y\,R(y,e,x),\\[2pt]
\textsf{target}(x) &\;:=\;\textsf{node}(x)\land\neg\exists e\,y\,R(x,e,y).
\end{aligned}
$$

To ensure that $R^{(3)}$ encodes a simple path, we need to verify that the following holds:
\begin{itemize}
  \item {There is exactly one source}
  $ \phi_1(R) \df
     \exists s\bigl(
        \textsf{source}(s)\land
        \forall s'\,[\,\textsf{source}(s')\!\rightarrow s'=s\,]
     \bigr).
  $
  \item {There is exactly one target}
  $ \phi_2(R) \df
     \exists t\bigl(
        \textsf{target}(t)\land
        \forall t'\,[\,\textsf{target}(t')\!\rightarrow t'=t\,]
     \bigr).
  $
  \item {Every internal node has precisely one in-edge and one out-edge}
  $$
\begin{aligned}
 \phi_3(R) \df \forall n\,\Bigl(&
     \textsf{node}(n)\land\neg\textsf{source}(n)\land\neg\textsf{target}(n)
     \;\rightarrow\\[2pt]
     &\bigl(
        \exists e\,y\,[\,R(n,e,y)\land
           \forall e'\,y'\,(R(n,e',y')\!\rightarrow e'=e\land y'=y)\,]\;\land\\[2pt]
     &\qquad
        \exists e\,y\,[\,R(y,e,n)\land
           \forall e'\,y'\,(R(y',e',n)\!\rightarrow e'=e\land y'=y)\,]
     \bigr)\Bigr).
  \end{aligned}
  $$
  \item \emph{Source node and target node are distinct}
  $ \phi_4(R) \df
     \forall x\,\bigl(\textsf{source}(x)\!\rightarrow\!\neg\textsf{target}(x)\bigr).
  $
\end{itemize}

\noindent
The conjunction of these four sub-formulas is the desired formula. Namely, 
$\psi^{\textsf{simple}}(R) \df \bigwedge_{i=1}^4 \phi_i(R)$.

We now move to define $\phi^{\mathsf{simple}}_{\gpcppat}(\bar x)$:
	\[ \phi^{\mathsf{simple}}_{\gpcppat}(\bar x) \df 
	\exists R:\, \psi^{\mathsf{simple}}(R) \wedge \phi^R_{\gpcppat}(\bar x)
\]
where $\phi^R_{\gpcppat}(\bar x)$ is defined similarly to $\phi_{\gpcppat}(\bar x)$ from the proof of \Cref{lem:gpcandfotc} except that it operates only on simple paths $R$. 
In particular, we change the base cases as follows:

If $\gpcppat \df (x)$ then
\[\phi_{\pi}(s,t,x) \df  \mathsf{node}(x)  \wedge s=x \wedge t=x\]
If $\gpcppat \df (\,)$ then
\[\phi_{\pi}(s,t) \df  \mathsf{node}(s) \wedge s=t\]
If $\pi \df \overset{x}{\rightarrow}$ then
\[
\phi_{\pi}(s,t,x) \df  \mathsf{edge}(x) \wedge \mathsf{node}(s) \wedge \mathsf{node}(t) \wedge \srcf(x,s) \wedge \tgtf(x,t) 
\]
If $\pi \df \overset{}{\rightarrow}$ then
\[
\phi_{\pi}(s,t) \df \exists x:\ \left( \mathsf{edge}(x) \wedge \mathsf{node}(s) \wedge \mathsf{node}(t) \wedge \srcf(x,s) \wedge \tgtf(x,t) \right)
\]
If $\pi \df  \overset{x}{\leftarrow}$ then
\[
\phi_{\pi}(s,t,x) \df \mathsf{edge}(x) \wedge \mathsf{node}(s) \wedge \mathsf{node}(t) \wedge \srcf(x,t) \wedge \tgtf(x,s) 
\]
If $\pi \df  \overset{}{\leftarrow}$ then
\[
\phi_{\pi}(s,t) \df \exists x:\,\left(\mathsf{edge}(x) \wedge \mathsf{node}(s) \wedge \mathsf{node}(t) \wedge \srcf(x,t) \wedge \tgtf(x,s) \right)
\]
The inductive cases remain unchanged.
\end{proof}

We can do this similarly for the "restrictor" $\mathsf{trail}$.
To this end, we drop the ``no repeated nodes'' requirement and instead we forbid re-using an edge.
We define the successor formula on pair of edges 
\[\mathsf{succ}(e,e') \df 
   \exists x\,y\,z\,(R(x,e,y)\land R(y,e',z))
   \]
   and with that at hand, we define
\[
\forall e,e_1,e_2 \;\bigl(\mathsf{succ}(e,e_1)\land\mathsf{succ}(e,e_2)\rightarrow e_1=e_2\bigr)\;\land
\forall e,e_1,e_2 \;\bigl(\mathsf{succ}(e_1,e)\land\mathsf{succ}(e_2,e)\rightarrow e_1=e_2\bigr).
\]

We can hence obtain a similar lemma to~\Cref{lem:simple2eso} that replaces $\gpckw{simple}$ with $\gpckw{trail}$.
\begin{lemma}\label{lem:trail2eso}
For every "path pattern" $\gpcppat$ there is an $\ESO$~formula $\phi^{\gpckw{trail}}_{\gpcppat}(\bar x)$ such that $\bar x = \fv{(\gpcppat)}$ are first order variables, and for every property graph $G$ and partial mapping $\mu$ the following are equivalent:
	\begin{itemize}
		\item 
		$\exists p,\mu:  ( p,\mu) \in \semgpc{\gpckw{trail}\, \gpcppat}^{G}$ 
		\item 
		$\sem{\phi^{\gpckw{trail}}_{\gpcppat}(\bar x)}{\mu(\bar x)}{G} = \top$
	\end{itemize}
\end{lemma}

The previous two lemmas allow us to conclude the following.
\begin{corollary}
    For every "output pattern" $\restrict \gpcppat_{\return}$ there is an ESO formula $\phi_{\restrict \gpcppat_{\return}}(\bar x)$ such that $\bar x = \fv ({\restrict \gpcppat_{\return}})$  are first order variables, and for every graph $G$ the following holds: \[
    \semgpc{\restrict \gpcppat_{\return}}^G  \subseteq \semgpc{\phi_{\restrict \gpcppat_{\return}}(\bar x)}^G
    \]
\end{corollary}
\begin{proof}
    If $\rho = \gpckw{simple} \,| \,\gpckw{trail}$, then the claim follows directly from the previous two lemmas. 
    If $\rho  = \gpckw{shortest} [ \gpckw{simple} | \gpckw{trail}]$ then it follows from the definition of the semantics of $\gpckw{shortest}$ and~\Cref{lem:gpcandfotc}.
\end{proof}

We now move to the proof of ~\Cref{thm:gql_foeso}.
Lemmas~\ref{lem:simple2eso} and~\ref{lem:trail2eso} show that every
{path pattern} equipped with a restrictor
(\texttt{simple} or \texttt{trail}) is definable by an $\ESO$ formula.
Corollary~\ref{cor:restrictor_upper} lifts this to each \emph{output
pattern} $\restrict\,\gpcppat_{\return}$, yielding an
$\ESO$ formula for the auxiliary relation
$R_{\restrict\,\gpcppat_{\return}}$.

A complete "GQL" query $\gpcq$ is obtained by applying the usual
relational-algebra operators to such relations.  Because every RA
operator (projection, selection, join, union, difference) is first-order
definable, we can replace each occurrence of
$R_{\restrict\,\gpcppat_{\return}}$ by its $\ESO$ definition and push
the "FO" operators outward.  This results in an
$\FOESO$ formula
whose evaluation over a graph $G$ is empty exactly when
$\semgpc{\gpcq}^{G}$ is empty. This establishes
Lemma~\ref{thm:gql_foeso}.
\end{toappendix}

\begin{proof}[Proof Sketch]
The proof translates a \emph{restrictor-free} "GQL" query with "restrictors" into a
\FOESO~formula whose  emptiness coincides with the query's emptiness.

\smallskip
\noindent\textbf{(1) Encoding a restricted path.}
We encode a path
\((n_1,e_1,\dots,n_k,e_k,n_{k+1})\)
with a ternary relation
\(R(x,e,y)\)
containing the \(k\) triples
\((n_j,e_j,n_{j+1})\).
For the \texttt{simple} "restrictor" we use the "FO" test
\(\psi^{\mathsf{simple}}(R)\). To define it we use the following auxiliary shortcuts
\[
\begin{array}{rl}
\mathsf{edge}(e)   &\;:=\;\exists x,y\,R(x,e,y),\\
\mathsf{node}(x)   &\;:=\;\exists e,y\,\bigl(R(x,e,y)\lor R(y,e,x)\bigr),\\
\mathsf{source}(x) &\;:=\;\mathsf{node}(x)\land\neg\exists e,y\,R(y,e,x),\\
\mathsf{target}(x) &\;:=\;\mathsf{node}(x)\land\neg\exists e,y\,R(x,e,y).
\end{array}
\]
The conjunction of
the following requirements gives us \(\psi^{\mathsf{simple}}(R)\):
\begin{itemize}
  \item[(i)] {There is exactly one source}
\[
     \exists s\bigl(
        \mathsf{source}(s)\land
        \forall s' (\mathsf{source}(s')\!\rightarrow s'=s)
     \bigr).
\]
  \item[(ii)] {There is exactly one target}
\[
     \exists t\bigl(
        \mathsf{target}(t)\land
        \forall t'(\mathsf{target}(t')\!\rightarrow t'=t)
     \bigr).
\]
  \item[(iii)] {Every internal node (\ie~neither source nor target) has precisely one in-edge and one out-edge}
$ 
 \forall n\,\Bigl(
     \left(\mathsf{node}(n)\land\neg\mathsf{source}(n)\land\neg\mathsf{target}(n)\right)
     \;\rightarrow
     \bigl(
\mathsf{one\_in}(n) \land \mathsf{one\_out}(n)
     \bigr)\Bigr)
$ 
  where $
      \mathsf{one\_in}(n) \df \exists y,e \,(R(y,e,n)\land
           \forall y',e'\,(R(y',e',n) \rightarrow (e'=e\land y'=y)))$ and  $\mathsf{one\_out}(n)$
          is defined symmetrically. 
\end{itemize}
\noindent
For the \texttt{trail} restrictor, we drop condition (iii) and instead require that each edge has at most one successor edge and one predecessor edge. To express this, we define a successor formula over pairs of edges.
\[\mathsf{succ}(e,e') \df 
   \exists x,y,z\,(R(x,e,y)\land R(y,e',z))
   \]
   and with that at hand, we define
\begin{multline*}
\forall e,e_1,e_2 \;\bigl((\mathsf{succ}(e,e_1)\land\mathsf{succ}(e,e_2))\rightarrow e_1=e_2\bigr)\;\land\\
\forall e,e_1,e_2 \;\bigl((\mathsf{succ}(e_1,e)\land\mathsf{succ}(e_2,e))\rightarrow e_1=e_2\bigr).
\end{multline*}
For the \texttt{shortest} restrictor, no special handling is needed, as the emptiness of a pattern is preserved when this "restrictor" is applied.

\smallskip
\noindent\textbf{(2) Changing the atomic patterns.}
For every path pattern \(\gpcppat\) we build, by structural induction,
an $\ESO$ formula
\(
  \phi^{R}_{\gpcppat}(\bar x)
\)
with $\bar x = \fv(\gpcppat)$ that is true iff
\(R\) encodes a restricted path matching \(\gpcppat\).
Only the \emph{base cases} differ from the \FOTC\ construction of~\Cref{thm:gql_fotc}, for example:
\[
\begin{aligned}
\phi_{(x)}(s,t,x) &:= \mathsf{node}(x)\land s=x\land t=x,\\
\phi_{\overset{x}{\rightarrow}}(s,t,x) &:= 
     \mathsf{edge}(x)\land\srcf(x,s)\land\tgtf(x,t),
\end{aligned}
\]
with $\mathsf{node}(x)$ and $\mathsf{edge}(x)$ as reconstructed above from $R$.
A restricted path pattern is therefore represented by
$
\exists R\;\bigl(\psi^{\rho}(R)\land\phi^{R}_{\gpcppat}(\bar x)\bigr),
$
an $\ESO$ sentence with free variables \(\bar x\).

\smallskip
\noindent\textbf{(3) From output patterns to queries.}
Every output pattern \(\bar\restrict\,\gpcppat_{\return}\) introduces a
relation symbol
\(R_{\bar\restrict\,\gpcppat_{\return}}\)
and the $\ESO$ definition obtained in step~2.
A complete "GQL" query is an RA expression over these symbols.
Because each RA operator (projection, join, union, difference) is
FO‐definable, substituting the $\ESO$ definitions and pushing the FO
connectives outward yields an \FOESO\ formula
$\phi_{\gpcq}(\bar x)$ 
satisfying
\(
  \semgpc{\gpcq}^{G}=\emptyset
  \iff
  \sem{\phi_{\gpcq}(\bar x)}{}{G}=\emptyset
\)
for every graph \(G\).
\end{proof}

\begin{example}\label{ex:foeso}
Continuing~\Cref{ex:fotc}, if the pattern is preceded by the "restrictor" $\gpckw{simple}$ then the translation is 
$ \phi_2(x,y)\df
\exists R: \psi^{\gpckw{simple}}(R) \land \tilde{\phi}_1(x,y)
$ where $\tilde{\phi}_1(x,y)$ is obtained from ${\phi}_1(x,y)$ by replacing $E(e)$ with $\mathsf{edge}(e)$ derived from $R$ (as detailed in the proof sketch of~\Cref{thm:gql_foeso}).
\end{example}

The complexity class \AP""PNPLOG"" consists of decision problems solvable in deterministic polynomial time using logarithmically many adaptive queries to an "NP" oracle~\cite{Wagner87,SV00}. Equivalently, it can be defined as the class of problems solvable by first generating polynomially many "NP" oracle queries, performing all of them in parallel (non-adaptively), and then computing the final answer from the oracle’s yes/no responses.

\begin{corollary}\label{cor:restrictor_upper}
    The data complexity of the evaluation problem for "GQL" queries is in  "PNPLOG".
\end{corollary}
\begin{proof}[Proof Sketch]
    Let $\gpcq$ be a "GQL" query and let $\phi_{\gpcq}$ be the \FOESO~formula guaranteed in~\Cref{thm:gql_foeso}.
    Let us denote by $\phi_1, \ldots, \phi_k$ the first-order \ESO~views. Assume that each $\phi_i$ arity $k_i$, and let us denote $n\df |\adom(G)|$.
\begin{enumerate}
  \item For every $i$ and every tuple $\bar a\in \adom(G)^{k_i}$ ask the NP oracles
        whether $\sem{\phi_i}{\bar x \mapsto \bar a}{G} = \top$.  
        The total queries are $\sum_i n^{k_i}=n^{ O(1)}$.
  \item Issue this batch of polynomially many queries \emph{in parallel} to obtain
       the complete satisfying assignments for all of the ${\phi_i}$s.
  \item Evaluate the remaining first-order part of $\phi_{\gpcq}$ deterministically
in $poly(n)$ time.
\end{enumerate}
Notice that the algorithm makes polynomially many non-adaptive NP calls.
\end{proof}

\textbf{Is the bound tight?}
\AP
To investigate this, we introduce the ""Odd-Index problem"". In its original form~\cite[Theorem 5.2]{wagner1987more}, the problem asks: given a list of Boolean formulas \(\phi_1, \ldots, \phi_n\) in 3-CNF, is there an odd index \(i \in [n]\) such that:
\begin{itemize}
	\item all formulas \(\phi_1, \ldots, \phi_i\) are satisfiable, and
	\item all formulas \(\phi_{i+1}, \ldots, \phi_n\) are unsatisfiable?
\end{itemize}
This problem can be generalized  by replacing the 3-CNF satisfiability with any "NP"-hard decision problem. In our context we replace it with \emph{simple-path (trail) even-length reachability} in graphs. Formally, we define the following problem:
\begin{quote}
	Given a list \((G_1, s_1, t_1), \ldots, (G_n, s_n, t_n)\), where each \(G_i\) is a graph and \(s_i, t_i\) are designated nodes, determine whether there exists an \emph{odd index} \(i \in [n]\) such that:
	\begin{itemize}
		\item for all \(j \leq i\), there is a simple path (trail) of even length from \(s_j\) to \(t_j\) in \(G_j\), and
		\item for all \(j > i\), there is no such simple path (trail) of even length.
	\end{itemize}
\end{quote}

We reduce this problem to the evaluation of a  "GQL" query, thereby establishing the following lower bound.
\begin{lemma}\label{lem:restrictors_lower}
	The evaluation problem for "GQL" queries with \emph{restrictors} is "PNPLOG"-hard.
\end{lemma}

Combining Corollary~\ref{cor:restrictor_upper} with Lemma~\ref{lem:restrictors_lower} yields the following tight bound:

\begin{theorem}
	The evaluation problem for "GQL" queries with "restrictors" is "PNPLOG"-complete.
\end{theorem}
This completes the "PNPLOG"-completeness characterization for "GQL" queries with restrictors.

\subsection{Proof of Lemma~\ref{lem:restrictors_lower}: "PNPLOG" Lower Bound}
\begin{figure*}[t]
  \centering
  \resizebox{\linewidth}{!}{\begin{tikzpicture}[
  every label/.style={font=\scriptsize},
  nstyle/.style={circle,draw,inner sep=1pt},
  vstyle/.style={circle,draw,inner sep=1pt,fill=white},
  box/.style={draw,dashed,rounded corners,
              inner xsep=10pt,inner ysep=12pt},
  ->,>=Stealth,shorten >=2pt,shorten <=2pt,node distance=1.5cm
]

\node[nstyle,label=below:$s$] (s0) {};


\node[nstyle,right=1cm of s0,label=below:$s_1$] (s1) {};
\node[nstyle,right=1cm of s1,label=below:$t_1$] (t1) {};
\node[box,fit={(s1) (t1)},label=below:{\small$G_1^{\mathsf{lo}}$}] (G1) {};
\node[vstyle,above=8mm of t1,label=above:{$v_1{:}\ell_0$}] (v1) {};
\draw (t1) -- (v1);

\node[nstyle,right=2.3cm of t1,label=below:$s_2$] (s2) {};
\node[nstyle,right=1cm  of s2,label=below:$t_2$] (t2) {};
\node[box,fit={(s2) (t2)},label=below:{\small$G_2^{\mathsf{lo}}$}] (G2) {};

\node[nstyle,right=2.3cm of t2,label=below:$s_3$] (s3) {};
\node[nstyle,right=1cm  of s3,label=below:$t_3$] (t3) {};
\node[box,fit={(s3) (t3)},label=below:{\small$G_3^{\mathsf{lo}}$}] (G3) {};
\node[vstyle,above=8mm of t3,label=above:{$v_3{:}\ell_0$}] (v3) {};
\draw (t3) -- (v3);

\node[right=1.7cm of t3] (dots1) {$\cdots$};

\node[nstyle,right=1.3cm of dots1,label=below:$s_{2k+1}$] (sk) {};
\node[nstyle,right=1cm  of sk,label=below:$t_{2k+1}$] (tk) {};
\node[box,fit={(sk) (tk)},label=below:{\small$G^{\mathsf{lo}}_{2k+1}$}] (Gk) {};
\node[vstyle,above=8mm of tk,label=above:{$v_{2k+1}{:}\ell_0$}] (vk) {};
\draw (tk) -- (vk);

\node[right=1.7cm of tk] (dots2) {$\cdots$};

\node[nstyle,right=1.3cm of dots2,label=below:$s_n$] (sn) {};
\node[nstyle,right=1cm  of sn,label=below:$t_n$] (tn) {};
\node[box,fit={(sn) (tn)},label=below:{\small$G_n$}] (Gn) {};

\draw (s0) -- (s1);
\draw (t1) -- (s2);
\draw (t2) -- (s3);
\draw (t3) -- (dots1);
\draw (dots1) -- (sk);
\draw (tk) -- (dots2);
\draw (dots2) -- (sn);

\def\rowgap{2.2cm}

\node[nstyle,above=\rowgap of s1,label=above:$s'_2$] (S2u) {};
\node[nstyle,right=1cm of S2u,label=above:$t'_2$] (T2u) {};
\node[box,fit={(S2u) (T2u)},label=above:{\small$G^{\mathsf{up}}_2$}] (G2u) {};

\node[nstyle,above=\rowgap of s2,label=above:$s'_3$] (S3u) {};
\node[nstyle,right=1cm of S3u,label=above:$t'_3$] (T3u) {};
\node[box,fit={(S3u) (T3u)},label=above:{\small$G^{\mathsf{up}}_3$}] (G3u) {};

\node[nstyle,above=\rowgap of s3,label=above:$s'_4$] (S4u) {};
\node[nstyle,right=1cm of S4u,label=above:$t'_4$] (T4u) {};
\node[box,fit={(S4u) (T4u)},label=above:{\small$G^{\mathsf{up}}_4$}] (G4u) {};

\node[right=1.7cm of T4u] (dotsU1) {$\cdots$};

\node[nstyle,above=\rowgap of sk,label=above:$s'_{2k+2}$] (Sk2u) {};
\node[nstyle,right=1cm of Sk2u,label=above:$t'_{2k+2}$] (Tk2u) {};
\node[box,fit={(Sk2u) (Tk2u)},label=above:{\small$G^{\mathsf{up}}_{2k+2}$}] (Gk2u) {};

\node[right=1.7cm of Tk2u] (dotsU2) {$\cdots$};

\draw[densely dotted] (v1) -- (S2u);
\draw[densely dotted] (v1) -- (S3u);
\draw[densely dotted] (v1) -- (S4u);
\draw[densely dotted] (v1) -- (Sk2u);
\draw[densely dotted] (v3) -- (S4u);
\draw[densely dotted] (v3) -- (Sk2u);
\draw[densely dotted] (vk) -- (Sk2u);

\def\topgap{1.4cm} 
\node[nstyle,above=\topgap of dotsU1,label=above:$t$] (TopT) {};

\draw[densely dotted] (T2u) -- (TopT);
\draw[densely dotted] (T3u) -- (TopT);
\draw[densely dotted] (T4u) -- (TopT);
\draw[densely dotted] (Tk2u) -- (TopT);

\end{tikzpicture}}
           \caption{The graph $G^\star$ used in the reduction.  
\textbf{Lower layer:} Copies $G_i^{\mathsf{lo}}$ of each input graph $G_i$ are linked by edges 
$s\to s_1$, $t_i\to s_{i+1}$, and odd‐index nodes $v_j$ (label $\ell_0$) are appended via $t_j\to v_j$.  
\textbf{Upper layer:} Fresh copies $G_j^{\mathsf{up}}$ for $j\ge2$ have sources $s_j'$ connected from every $v_i$ with $i<j$, and all upper targets $t_j'$ feed into the global sink $t$.  
Source nodes $s,s_i,s_j'$ carry label $s$; targets $t,t_i,t_j'$ carry $t$; and each $v_j$ (odd $j$) carries $\ell_0$.}
  \label{fig:gstar}
\end{figure*}

We give a log-space reduction from the "PNPLOG"-complete
{"Odd-Index problem"} 
to the evaluation of a  \textsc{GQL} query that uses the
\gpckw{simple} restrictor.  
(A similar construction works with \gpckw{trail}.)

\smallskip
\noindent
\textbf{Query.} We define the query  $Q \df Q_L \join Q_R$ where 
\[
\begin{aligned}
  Q_L &\coloneqq 
    \gpckw{simple}\,\Bigl(
      ({:}s)\to
  \gpcppat
      \to(x{:}\ell_0)
    \Bigr)_{x} \quad \text{where}\quad 
    \gpcppat  \coloneqq  
      ({:}s)(\to \to)^{*} ({:}t)
     \Big(  \to(:s) (\to \to)^{*} ({:}t)\Big)^*  
    \\
  Q_R &\coloneqq 
    (x{:}\ell_0)_{x}\;
    \setminus\;
    \gpckw{simple}\,\Bigl(
      (x{:}\ell_0)\to(:s) (\to\to)^{*}({:}t)\to({:}t)
    \Bigr)_{x}
\end{aligned}
\]

\smallskip
\noindent
\textbf{Graph construction.}
Assume the input list
\((G_1,s_1,t_1),\dots,(G_n,s_n,t_n)\) is pairwise
disjoint, and each \(s_i,t_i\)  carries labels ${s},{t}$, respectively.
We build \(G^\star\), depicted in Figure~\ref{fig:gstar}, as follows:
\begin{enumerate}
  \item \textit{Lower layer.}
    \begin{enumerate}
      \item For every \(1\le i \le n \) add a fresh copy \(G_i^{\mathrm{lo}}\).
      \item Add a fresh node \(s\) labeled by ${s}$, and an edge \(s\to s_1\).
      \item For each \(i<n\) add an edge \(t_i\to s_{i+1}\).
      \item For each \textit{odd} \(j\) add a node $v_j$ labeled $\ell_0$,
            and an edge \(t_j\to v_j\).
    \end{enumerate}

  \item \textit{Upper layer.}
    \begin{enumerate}
      \item For every \(2\le j\le n\) add a  fresh copy
            \(G_j^{\mathrm{up}}\) with nodes \(s_j',t_j'\) (replacing \(s_j,t_j\)).
      \item Add an edge \(v_j\to s'_{i}\) for every odd \(j\) and $i\ge j+1$.
      \item Add a fresh node  \(t\) labeled by \({t}\), and
            edges \(t_i'\to t\) for all \(i\ge2\).
    \end{enumerate}
\end{enumerate}

Label summary: all sources
\(s,s_1,\dots,s_n,s_2',\dots,s_n'\) carry $s$;
all targets \(t,t_1,\dots,t_n,t_2',\dots,t_n'\) carry $t$;
each \(v_j\) (odd \(j\)) carries \(\ell_0\).

Notice that $G^{\star}$ construction is first-order definable and uses only
logarithmic space.

\smallskip
\noindent
\textbf{Correctness.}
To prove correctness, we investigate both parts $Q_L$ and $Q_R$ of the query in the following lemmas. 
\begin{lemma}
      $Q_L$ evaluated on $G^\star$ returns exactly those nodes $v_j$ such that for every  $i\le j$, the graph $G_i$ contains a simple path of even-length from $s_i$ to $t_i$.
\end{lemma}
\begin{proof}
Assume first that $Q_L$ evaluated on $G^\star$ returns a node $n$.
Due to $Q_L$'s definitiom, 
there must exist in $G^\star$ a \emph{simple} path of the form
\[
  s \;\to\; s_1 \;\overset{p_1}{\leadsto}\; t_1  \;\to\;  s_2 \;\overset{p_2}{\leadsto}\; t_2 
  \;\to\; \cdots  \;\to\;  s_i \;\overset{p_i}{\leadsto}\; t_i  \;\to\;
  n,
\]
where (1) $1\le i \le n$;
(2) \(s\) is the unique fresh node labeled $s$ created in Step 1(b), (3) \(s_i\) is the source of $G_i^{\mathrm{lo}}$ in the the lower layer;
(4) each $p_i$ consists of an even number of edges; and
(5) \(n\) is labeled~\(\ell_0\).

Notice that he only \(\ell_0\)-labelled nodes are
\(v_1,v_3,v_5,\dots\). 
Let us denote \(n=v_j\) for the corresponding odd~\(j\).
By construction \(v_j\) is attached to \(t_j\), hence \(t'=t_j\).

Due to the structure of $G^{\star}$ and to the labels specifies in $Q_L$, the matched path is forced to traverse, in order,
\[
G_1^{\mathrm{lo}},\;G_2^{\mathrm{lo}},\;\dots,\;G_j^{\mathrm{lo}},
\]
employing the connecting edges \(t_i\to s_{i+1}\) for \(i<j\).
As the overall walk is simple, in each lower copy
\(G_i^{\mathrm{lo}}\) the sub-path from \(s_i\) to \(t_i\) is itself
simple.
Consequently, for all $i\le j$, it holds that 
  $G_i$ contains a simple path of even length $s_i\leadsto t_i$ from $s_i$ to $ t_i$.

Conversely, suppose there is an odd index \(j\) such that every
\(G_i\;(i\le j)\) has a simple even-length path \(s_i\leadsto t_i\).
Concatenate those paths with the connecting edges

\[
  s\to s_1,\quad
  t_i\to s_{i+1}\;(i<j),\quad
  t_j\to v_j;
\]

The result is a simple path that matches
$Q_L$, hence \(v_j\) is returned.
\end{proof}
\begin{lemma}
    $Q_R$ evaluated on $G^\star$ returns exactly those nodes $v_j$ such that, for every $i\ge j+1$, the graph $G_i$ contains \emph{no} simple path of even-length from $s_i$ to $t_i$.
\end{lemma}

\begin{proof}

Recall
$
  Q_R \coloneqq 
    (x{:}\ell_0)_{x}\;
    \setminus\;
    \gpckw{simple}\,\Bigl(
      (x{:}\ell_0)\to(:s) (\to\to)^{*}({:}t)\to({:}t)
    \Bigr)_{x}
$
    We refer to the patterns before and after $\setminus$, respectively, as
    \emph{left} and \emph{right} parts. 

\emph{Left part.}
The projection \((x{:}\ell_0)_{x}\) simply returns every
\(\ell_0\)-labelled vertex, i.e.\ the odd-indexed nodes
\(v_1,v_3,v_5,\ldots\).

\smallskip
\emph{Right part.}
A
node \(v_j\) is matched to the right part if there is a path 
\[
  v_j \;\to\; s_i' \;\leadsto\; t_i' \;\to\; t
  \quad\text{for some } i\ge j+1
  \tag{\(*\)}
\]
exists in \(G^\star\).  where  $ s_i' \;\leadsto\; t_i'$ is of even length, 
The edge \(t_i'\to t\) is the \emph{only} occurrence of two consecutive
$t$-labelled nodes, so any matched path must enter the upper
copy \(G_i^{\mathrm{up}}\) at its source \(s_i'\) and reach \(t_i'\)
without repeating nodes.  
Because each \(G_i^{\mathrm{up}}\) is disjoint from the rest of the graph,
such a sub-path exists \emph{iff} the original graph \(G_i\) contains a
simple path of even length \(s_i\leadsto t_i\).

\emph{The difference of left part and right part}
Due to the above, $v_j$ is outputted by $Q_R$ whenever there is no even-length simple path in $G_i^{\mathrm{up}}$ from $s_i$ to $t_i$ for every $i\ge j+1$.
\end{proof}

\textbf{Combining Claims 1 and 2.}
Claim 1 shows that $Q_L$ returns precisely those markers \(v_j\) for which every graph \(G_i\) with \(i\le j\) admits a simple even-length \(s_i\leadsto t_i\) path.  Claim 2 shows that $Q_R$ returns exactly those same markers \(v_j\) for which no graph \(G_i\) with \(i\ge j+1\) admits such a path.  
Since our full query takes the intersection of these two results, it will be nonempty exactly when there exists some index \(j\) satisfying both “all earlier graphs succeed” and “all later graphs fail.”  But this is exactly the odd‐index condition required by the source problem, which completes the correctness argument.

Exactly the same reduction  go through if we replace the \gpckw{simple} restrictor by \gpckw{trail}, yielding identical bound.

\begin{toappendix}
	
\end{toappendix}

	\section{Incorporating Domain-Specific Operators}\label{sec:domspecific}

Logics considered so far let us combine the topology of a graph with simple data tests restricted to equality.
Yet, practical languages operate over concrete domains equipped with a rich set of comparisons and functions.

To add such data operations without worsening complexity, we rely on classical results from the study of constraint databases~\cite{constraint-databases} and embedded finite model theory \cite{fmt-app,collapsejournal}. These results allow us to extend "GQL" with typed data operations while preserving data-complexity bounds.

The underlying idea is to `embed' a finite database inside an infinite structure $\struct$ with a decidable first-order theory.
Queries can then use formulas over $\struct$ to express operations on domain values.
Three notable examples~\cite[Chapter 5]{fmt-app}) include:
\begin{itemize}
    \item The \AP""Linear Integer Arithmetic"" (\reintro{LIA}) structure $\intro*\strLA = \tup{\Z,0, 1, +, \leq}$ — linear arithmetic over integers.
    \item The \AP""Linear Real Arithmetic"" (\reintro{LRA}) structure $\intro*\strRLA = \tup{\reals,0,1, +, \leq}$ — linear arithmetic over reals.

    \item The \AP""Real Ordered Field"" (\reintro{ROF}) structure $\intro*\strROF = \tup{\reals,0,1, +, \cdot, \leq}$ —  real closed field (Tarski’s structure).
\end{itemize}

We focus on these numeric structures, though similar approaches exist for other domains like strings (e.g., reducts of the universal automatic structure~\cite{BLSS03}).

	\begin{toappendix}

\subsection{Embedded Finite Models: The Setting}
We briefly introduce here the setting of embedded finite models 
\cite[Chapter 5]{fmt-app}. 

\partitle{The model} Consider an \AP""infinite structure"" $\istruc$ with universe $\domain$, over a \AP""vocabulary"" $\vocab_\istruc$ containing both ""function names"" 
$f_1,\ldots,f_n$ 
and "relation names" $R_1,\ldots,R_m$.
We shall henceforth blur the distinction between function/relation names and their interpretation 
in $\istruc$.

Fix a "relational vocabulary" $\vocab$, disjoint from the "vocabulary" of 
$\istruc$. 
\sidediego{imported defn from body}
\AP	An element $u \in D$ is  ""definable"" over $\istruc$ 
if there exists a $\istruc$-formula $\varphi(x)$ such that for each $v \in D$ we have $\sem{\varphi(x)}{x\mapsto u}{\istruc} = \top$ iff $u=v$.
For example, the set of "definable" elements over $\strRLA$ is precisely the set of rational numbers.

A ""$\istruc$-embedded finite relational structure"" over $\vocab$ is a finite "relational $\vocab$-structure" $\struct$ such that $\Const$ is taken to be 
the set of first-order "definable" 
elements over $\istruc$.  
That is, each "relation name" of $\vocab$ is interpreted in $\struct$ as a finite relation over $\istruc$-"definable" elements.

\partitle{The logic} We define progressively more expressive logics over  "$\istruc$-embedded structures": 
(1)
first-order logic $\FO[\istruc]$, (2) hybrid transitive closure logic
$\FOHTC[\istruc]$, (3) hybrid second-order logic $\HSO[\istruc]$. 

\AP
A formula $\phi$ in $\FO[\istruc]$ over the "relational vocabulary" $\vocab$ is simply a 
first-order formula over the "vocabulary" $\vocab' := \vocab \dcup 
\vocab_\istruc$, where we additionally allow the ""active-domain quantifier"" 
$\existsRadom x$.
\AP
Here, the quantifier $\intro*\existsRadom x$ asserts the existence of an element in
the "active domain" of a given "relational structure" $\struct$, whereas the standard 
quantifier $\exists x$ asserts 
the existence of an element in the "domain" $\domain$ of $\istruc$ ("ie", not
necessarily in the "active domain"). The meaning of the formula is immediate from
first-order logic, i.e., we interpret this over the structure $\istruc'$ over
the vocabulary $\vocab'$ with domain $\domain$ and relations/functions from
$\istruc$ and $\struct$. 
Such a formula $\phi$ is said to be an ""active-domain formula"", if every quantifier therein is an "active-domain quantifier". We now restate a classical result from embedded finite model theory (see, "eg",
\cite{collapsejournal,fmt-app}),
which is often called \AP""Restricted Quantifier Collapse"" (\reintro{RQC}). The intuition behind "RQC" is that the "collapse@Restricted Quantifier Collapse" lets us restrict the range of a quantifier to the active domain, hence reducing evaluation complexity.
\begin{proposition}[\cite{collapsejournal}]
    Let $\istruc$ be either "LRA", "LIA", or "ROF".
    Each $\FO[\istruc]$-formula 
    is equivalent to an
    "active-domain $\FO[\istruc]$-formula". Therefore, the data complexity for
    $\FO[\istruc]$ is in $\TCzero$.
    \label{prop:FO-embedded}
\end{proposition}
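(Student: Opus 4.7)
The plan is to establish the two parts separately: the quantifier-collapse statement, and the consequent data-complexity bound, exactly along the lines of the classical embedded-finite-model-theory argument of \cite{collapsejournal}.

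First, for the collapse itself, I would invoke the structural properties of the three concrete models. For $\strRLA$ and $\strROF$, the key fact is o-minimality: every parametrically definable subset of $\reals$ is a finite union of points and open intervals whose endpoints are among the parameters (or $\pm\infty$). For $\strLA$, the analogous ingredient is Presburger quantifier elimination, which gives that every parametrically definable subset of $\Z$ is a finite Boolean combination of intervals and arithmetic-progression cosets, again with endpoints and offsets definable from the parameters. The technical heart is then the standard inductive translation: given an $\FO[\istruc]$-formula $\exists y \, \psi(y, \bar x)$, where $\bar x$ is interpreted by elements from $\adom(\struct)$, one uses the structural decomposition to replace $\exists y$ by an active-domain quantifier ranging over a finite set of ``witness candidates'' $W(\bar x)$ that are $\FO[\istruc]$-definable from $\bar x$ and the finitely many ``bookkeeping'' constants $0, 1$. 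Concretely, the witnesses are the endpoints of the cells (for the o-minimal cases) or representatives of the residue classes up to the modulus (for $\strLA$), together with some shifts $\pm \epsilon$ simulated by adding a fresh active-domain element or using the $0,1$ constants. Applying this rewriting bottom-up to all unrestricted quantifiers yields an equivalent active-domain formula.

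The main obstacle, as usual in this line of work, is to control the rewriting uniformly: the number of witnesses per quantifier depends only on the formula $\psi$, not on $\struct$, so the size of the resulting active-domain formula is independent of the database and only the constants involved depend on $\bar x$. I would handle this by doing the rewriting by induction on formula structure and packaging the witness set as a fixed $\FO[\istruc]$-term in the current free variables; this is precisely the ``generic collapse'' mechanism. It is at this step that one must be careful that the $\istruc$-formulas defining $W(\bar x)$ are themselves independent of $\struct$, which is what makes the collapse \emph{uniform} in the sense needed for data complexity.

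Finally, for the $\TCzero$ data-complexity bound I would argue as follows. Once the formula $\varphi$ has been rewritten into an equivalent active-domain $\FO[\istruc]$-formula $\varphi'$, its evaluation on a $\istruc$-embedded structure $\struct$ reduces to: (i) enumerating tuples from $\adom(\struct)^k$ for each active-domain quantifier, which can be done in $\FO$ (hence in $\TCzero$) as a function of $\struct$; and (ii) evaluating atomic $\istruc$-predicates on those tuples, namely linear comparisons, additions, and in the $\strROF$ case multiplications of ``database constants''. Since iterated addition and multiplication of $n$-bit numbers lie in $\TCzero$ (Hesse--Allender--Barrington), and a single test $R_i(\bar t)$ or $f_j(\bar t) = t'$ requires only a constant number of such arithmetic operations, the whole evaluation sits in $\TCzero$ under the natural encoding of the active-domain values. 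Combining (i) and (ii) through the fixed Boolean structure of $\varphi'$ yields the stated data-complexity bound.
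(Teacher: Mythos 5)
The paper does not prove this proposition---it is imported verbatim from \cite{collapsejournal}, with only a short remark afterwards explaining the $\TCzero$ bound via quantifier elimination and $\TCzero$ arithmetic. Your sketch correctly reconstructs the standard argument from that source (o-minimality/quasi-o-minimality plus QE driving the witness-based quantifier rewriting, then Hesse--Allender--Barrington for evaluating the fixed arithmetic atoms), and it is consistent with the paper's own brief justification of the complexity bound, so there is nothing to flag beyond the minor imprecision that witnesses in open cells are handled by expressing non-emptiness of the cell directly rather than by ``adding a fresh active-domain element.''
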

The proposition results in $\TCzero$ complexity because an "active-domain formula" can be evaluated like a normal $\FO$ formula with a key difference: Each atomic formula from $\istruc$ is evaluated using decision procedures from $\istruc$. This leads to $\TCzero$ instead of $\ACzero$ (the data complexity of $\FO$), as each fixed $\istruc$-formula can be evaluated in $\TCzero$. Specifically, for a fixed $\istruc$-formula $\varphi(\bar{x})$, the problem of deciding whether $\sem{\varphi(\bar x)}{\mu}{\istruc}$ holds for a given interpretation $\mu$ of the variables $\bar{x}$ is in $\TCzero$, particularly when the formula is quantifier-free (as in "LRA", "LIA", and "ROF"), which admit quantifier elimination.

Interestingly, as shown in~\cite{collapsejournal}, the 
above proposition has been extended to more powerful logics
including $\FOTC$ and $\SO$. The trick is to interpret the ``higher-order
features'' (i.e. transitive closure operators and relation quantifications)
with an active-domain interpretation. The resulting logics are called
\AP""Hybrid Transitive Closure Logic"" ($\intro*\FOHTC$) and "Hybrid Second-Order Logic" ($\HSO$),
respectively. More precisely, syntactically, we define $\FOHTC[\istruc]$ as an 
extension of $\FO[\istruc]$ with the $\TCemb$-operators as defined for $\FOTC$
in Section \ref{sec:prelim}.  Semantically, for any interpretation
$\nu$ in a "$\istruc$-embedded structure" $\struct$, the meaning of
\AP
\begin{align*}
[\intro*\TCemb_{\bar u,\bar v}(\phi(\bar u \bar v \bar p))](\bar x, \bar y) 
\end{align*}
is that there exists a sequence of $k$-tuples of "active domain" elements 
$\bar d_0, \dotsc, \bar d_n \in \adom(\struct)^k$ such that 
    (i) $\sem{\phi}{\bar u \bar v \bar p \mapsto \bar d_i \bar d_{i+1} \nu(\bar p)}{\struct} = \top$ for every $i$, 
    (ii) $\bar d_0 = \nu(\bar x)$, and 
    (iii) $\bar d_n = \nu(\bar y)$.
The definition of an "active-domain formula" from $\FO[\istruc]$ extends to
$\FOHTC[\istruc]$, that is, each quantifier must be an "active-domain quantifier".
\begin{proposition}[\cite{collapsejournal}]
    Let $\istruc$ be either "LRA", "LIA", or "ROF".
    Each $\FOHTC[\istruc]$-formula is equivalent to an
    "active-domain $\FOHTC[\istruc]$-formula". Thus, the data complexity for
    $\FOHTC[\istruc]$ is in "NL".
    \label{prop:HTC}
\end{proposition}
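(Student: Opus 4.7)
The plan is to induct on the nesting depth of $\TC$-subformulas, repeatedly invoking the $\FO[\istruc]$ collapse of \Cref{prop:FO-embedded}. The crucial observation is that the semantics of $[\TC_{\bar u,\bar v}(\phi)](\bar x,\bar y)$ already quantifies only over tuples $\bar d_0,\dotsc,\bar d_n$ in $\adom(\struct)^k$. Consequently, for any $\FOHTC[\istruc]$-formula $\psi$, each $\TC$-subformula defines a relation on $\adom(\struct)$ only. This means that once we have rewritten the body of a $\TC$-subformula as an active-domain formula, the $\TC$-subformula itself may be treated, in any ambient context, as if it were an atomic relation symbol interpreted over the active domain.

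First, I would handle the innermost $\TC$-applications. Fix a $\TC$-subformula $[\TC_{\bar u,\bar v}(\phi(\bar u \bar v \bar p))](\bar x, \bar y)$ in which $\phi$ is a pure $\FO[\istruc]$-formula. By \Cref{prop:FO-embedded}, $\phi$ is equivalent to an active-domain $\FO[\istruc]$-formula $\phi^{\adom}$; replacing $\phi$ by $\phi^{\adom}$ does not change the semantics because the $\TC$ already restricts its internal quantification to $\adom(\struct)$. Hence the whole $\TC$-subformula is, syntactically, already active-domain.

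Next, I would proceed outward. Given a $\FOHTC[\istruc]$-formula $\psi$, let $\chi_1,\dotsc,\chi_r$ be its maximal $\TC$-subformulas, already rewritten into active-domain form by the previous step (applied recursively to their bodies, which are themselves $\FOHTC[\istruc]$-formulas of strictly smaller $\TC$-depth). Introduce fresh finite relation symbols $P_1,\dotsc,P_r$ of the appropriate arities and let $\psi'$ be obtained from $\psi$ by substituting each $\chi_i$ with the corresponding atom $P_i(\bar x_i,\bar y_i)$. Then $\psi'$ is a pure $\FO[\istruc]$-formula over the extended relational vocabulary $\vocab \cup \{P_1,\dotsc,P_r\}$, interpreted on a $\istruc$-embedded structure where each $P_i$ is interpreted as the (active-domain) relation defined by $\chi_i$. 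Applying \Cref{prop:FO-embedded} to $\psi'$ yields an equivalent active-domain $\FO[\istruc]$-formula $\psi^{\adom}$; substituting back the $\chi_i$ for $P_i$ produces an active-domain $\FOHTC[\istruc]$-formula equivalent to $\psi$.

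For the complexity bound, the active-domain formula obtained is evaluated just like a pure $\FOTC$-formula over $\struct$, except that each atomic $\istruc$-subformula is evaluated by the decision procedure of the theory of $\istruc$; for fixed quantifier-free $\istruc$-formulas over \LIA, \LRA, or \ROF\ this evaluation is in $\TCzero \subseteq \NL$. Combining this with \Cref{prop:lib} (which gives \NL\ data complexity for $\FOTC$ on ordered structures, and $\adom(\struct)$ is totally ordered via the order of $\istruc$), the overall data complexity is in \NL.

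The main obstacle is the second step: justifying that the collapse statement of \Cref{prop:FO-embedded} for $\FO[\istruc]$ genuinely applies to formulas over the extended vocabulary $\vocab \cup \{P_1,\dotsc,P_r\}$ where the $P_i$ are interpreted as arbitrary finite relations over $\adom(\struct)$. This is in fact precisely the setting of \cite{collapsejournal}, whose collapse theorems are stated for finite relational structures embedded into $\istruc$ with no distinction as to how the finite relations were obtained; nevertheless, some care is required to check that treating $\TC$-defined relations as additional ``input'' relations is semantically sound in every context in which the $\TC$-subformula may appear, in particular under negation, which is immediate because the $\TC$-relation is definable purely from $\struct$ and thus its complement on $\adom(\struct)$ is also a fixed finite relation.
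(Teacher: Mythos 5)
The paper does not actually prove this proposition---it imports it from \cite{collapsejournal}---so there is no in-paper argument to compare against; judged on its own merits, your proposal has one genuine gap, and it sits precisely where the real work of the cited result lies. The problem is the parameters $\bar p$ of a subformula $[\TC_{\bar u,\bar v}(\phi(\bar u\bar v\bar p))](\bar x,\bar y)$. Your key claim that such a subformula defines a relation on $\adom(\struct)$ only is true of the positions $\bar x,\bar y$ (the formula is false unless they denote active-domain tuples) but false of $\bar p$: the set of parameter tuples from the full domain of $\istruc$ for which the transitive closure holds is in general an infinite set not contained in the active domain. Consequently you cannot replace the maximal $\TC$-subformula $\chi_i$ by an atom over a \emph{finite} relation symbol: as written you drop $\bar p_i$ from the atom $P_i(\bar x_i,\bar y_i)$, so $\psi'$ is not equivalent to $\psi$; and if you keep $\bar p_i$ among the arguments, $P_i$ is no longer a finite relation over $\adom(\struct)$ and \Cref{prop:FO-embedded} simply does not apply to $\psi'$. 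This is not a corner case: parameters bound by unrestricted quantifiers are essential to the applications in this paper (the read-only registers $\bar x_P$ in the $\RDPQ[\istruc]$ encoding of Section~\ref{sec:rdpq} are exactly $\TC$-parameters ranging over the whole domain).

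Your argument is sound for the fragment in which every $\TC$-operator is parameter-free, or in which all parameters are bound by active-domain quantifiers: the inner collapse of the $\TC$-body is legitimate (an equivalent active-domain body agrees with the original at all valuations, hence in particular at those the $\TC$-semantics inspects), and the genericity of the collapse theorem over the vocabulary extended by the $P_i$ is indeed all one needs there, including under negation. To handle genuine parameters one must additionally show that, after collapsing the body, the truth of the $\TC$-subformula depends on $\bar p$ only through finitely many definable conditions relating $\bar p$ to active-domain elements, so that the unrestricted quantifiers binding $\bar p$ can themselves be restricted; this is where the model-theoretic tameness (quantifier elimination, o-minimality in the real cases) of LRA, LIA and ROF enters, and it is the substance of the theorem in \cite{collapsejournal} that your proposal leaves out. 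The final complexity paragraph is fine once the collapse itself is granted.
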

The reasoning for the data complexity in this proposition is the same as for
Proposition \ref{prop:FO-embedded}. Here, we obtain "NL" because the data
complexity for $\FOTC$ is in "NL" and  $\TCzero \subseteq \NL$.

\AP
Finally, we define $\intro*\HSO[\istruc]$ (""Hybrid Second-Order Logic"") syntactically as an extension of 
$\FO[\istruc]$ with
second-order quantifiers as defined for $\SO$ in Section \ref{sec:prelim}.
For the second-order quantifiers, we will write $\existsRadom R$ instead of 
$\exists R$. This is because we will interpret $\exists R\varphi$ as there
exists a relation over $\adom(\struct)$ such that $\varphi$ holds. We define
$\intro*\HESO[\istruc]$ as a restriction of $\HSO[\istruc]$ to formulas of the form
$
    \existsRadom R_1,\ldots,R_n ~ \varphi
$
where $\varphi$ is an $\FO[\istruc]$-formula over $\vocab_\istruc \cup \vocab
\cup \{R_1,\ldots,R_n\}$.
\begin{proposition}[\cite{collapsejournal}]
    Let $\istruc$ be either "LRA", "LIA", or "ROF".
    Each $\HSO[\istruc]$-formula (resp. $\HESO[\istruc]$-formula) 
    is equivalent to an
    "active-domain $\HSO[\istruc]$-formula" (resp. "active-domain $\HESO[\istruc]$-formula"). 
    Therefore, the data complexity for $\HSO[\istruc]$ (resp. $\HESO[\istruc]$)
    is in "PH" (resp. "NP").
    \label{prop:HSO}
\end{proposition}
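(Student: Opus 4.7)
The plan is to reduce to the first-order collapse (\Cref{prop:FO-embedded}) by processing the formula from the outside in. Because the second-order quantifiers are, by definition, of the form $\existsRadom R$ (ranging over subsets of $\adom(\struct)^{\arity(R)}$), they are already ``active-domain'' in spirit. Only the first-order quantifiers $\exists x$ over $\domain$ need to be turned into $\existsRadom x$.

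More concretely, given an $\HSO[\istruc]$-formula $\varphi$, write it in prenex-like form
\[
    Q_1 R_1 \cdots Q_n R_n \ \psi(R_1, \dotsc, R_n, \bar x)
\]
where each $Q_i \in \{\existsRadom, \forall^\adom\}$ and $\psi$ is an $\FO[\istruc]$-formula over the extended "vocabulary" $\vocab \dcup \vocab_\istruc \dcup \{R_1, \dotsc, R_n\}$. The \intro{Restricted Quantifier Collapse} of \cite{collapsejournal} (\Cref{prop:FO-embedded}) applies uniformly to any relational expansion of $\istruc$ by a finite set of predicate symbols interpreted as finite relations over "definable" elements; crucially, it does not care how many extra "relation names" the "vocabulary" $\vocab$ carries. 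Thus we can apply it to $\psi$, regarding $R_1,\dotsc,R_n$ as additional relation symbols whose interpretation, whatever it is, is a finite relation over "definable" elements of $\istruc$ (guaranteed because the range of the outer $\existsRadom R_i$ is precisely $\adom(\struct)^{\arity(R_i)} \subseteq \Const^{\arity(R_i)}$). This yields an equivalent "active-domain $\FO[\istruc]$-formula" $\psi'$, and putting the second-order prefix back gives an "active-domain $\HSO[\istruc]$-formula". The $\HESO$ case is the special instance where all $Q_i = \existsRadom$.

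For the data-complexity bound, suppose $\varphi$ has been rewritten as above so that every quantifier (first- or second-order) ranges over the active domain. Evaluating $\varphi$ on an input "$\istruc$-embedded structure" $\struct$ then proceeds as follows: for each outer second-order quantifier $\existsRadom R_i$, nondeterministically guess (or universally branch over) a subset of $\adom(\struct)^{\arity(R_i)}$; once all the $R_i$ are fixed, evaluate $\psi'$. Since $\psi'$ is an "active-domain $\FO[\istruc]$-formula" in the expanded vocabulary, and every fixed atomic $\istruc$-formula (which we may assume quantifier-free thanks to quantifier elimination for "LRA", "LIA", and "ROF") is decidable in $\TCzero$, the evaluation of $\psi'$ on a fixed $\struct$ and fixed relations $R_i$ can be carried out in $\TCzero \subseteq \PH$. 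Wrapping this in the alternating second-order prefix places $\HSO[\istruc]$ in $\PH$, and using only existential second-order guesses places $\HESO[\istruc]$ in $\NP$.

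The main obstacle is the first step, namely checking that the collapse of \cite{collapsejournal} continues to hold when the "vocabulary" is enlarged by auxiliary symbols $R_1, \dotsc, R_n$ that are themselves introduced by the surrounding second-order quantifiers. This is not a computation we need to redo but rather a matter of verifying that the ``collapse'' proof in \cite{collapsejournal} is genericity-based and applies to any finite relational expansion whose relations are finite sets of "definable" elements, which is exactly our setting since the second-order quantifiers range over $\adom(\struct)^{\arity(R_i)}$. Once this point is settled, the rest is a direct composition of the collapse theorem with the standard $\SO$/$\ESO$ upper bounds from \Cref{prop:logics-complexities}.
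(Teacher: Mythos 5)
The paper does not actually prove this proposition: it is imported wholesale from \cite{collapsejournal}, and the only argument supplied in the text is the one-paragraph justification of the complexity consequence (evaluate the active-domain formula as an ordinary $\SO$/$\ESO$ formula, deciding each fixed quantifier-free $\istruc$-atom in $\TCzero$). Your proposal goes further and tries to derive the hybrid collapse itself from the first-order collapse (\Cref{prop:FO-embedded}). For $\HESO[\istruc]$ this works: such formulas are by definition of the form $\existsRadom R_1,\dotsc,R_n\,\psi$ with $\psi$ first-order, and your key observation --- that the collapse of \Cref{prop:FO-embedded} is uniform in additional relation symbols interpreted as finite relations over definable elements, which the sets $R_i\subseteq\adom(\struct)^{\arity(R_i)}$ are --- is correct and is exactly what makes the reduction go through. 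The complexity accounting is also fine.

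The gap is in the $\HSO$ half, at the very first step: you assume the formula can be put in second-order prenex form $Q_1R_1\cdots Q_nR_n\,\psi$ with $\psi$ first-order. But $\HSO[\istruc]$ allows second-order quantifiers nested arbitrarily inside first-order ones, including inside \emph{unrestricted} quantifiers ranging over the infinite domain $\domain$, and the standard prenexing device fails in this setting: to commute $\existsRadom R$ past an outer $\forall x$ one Skolemizes $R$ into a relation $R'$ with an extra coordinate recording $x$, but $R'$ must still be a subset of a power of the finite active domain while $x$ ranges over all of $\domain$. Nor can you instead collapse from the inside out: after rewriting the innermost first-order matrix, the subformula $\existsRadom R\,\psi'$ is no longer an $\FO[\istruc]$-formula, so \Cref{prop:FO-embedded} cannot be applied as a black box to the surrounding first-order context --- and that context's unrestricted quantifiers are precisely where the collapse machinery is needed. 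This is why \cite{collapsejournal} proves the hybrid collapse by a direct induction on formula structure, re-running the one-quantifier-at-a-time collapse argument over a higher-order active-domain matrix, rather than by reduction to the first-order case. As written, your argument establishes the $\HESO$ statement and the prenex fragment of $\HSO$, but not the full proposition.
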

The reasoning for the data complexity in this proposition is the same as for
Proposition \ref{prop:FO-embedded}. Here, we obtain PH (resp.~"NP") because the 
data complexity for $\SO$ (resp.~$\ESO$) is in PH (resp.~"NP") and that 
$\TCzero$ $\subseteq$ "NP" $\subseteq$ "PH".
\end{toappendix}

\subsection{Embedded Finite Model Framework}\label{sec:efm}

\partitle{Embedded finite structures}
\AP	An element $u$ from the domain $D$ of $\istruc$ is  ""definable"" over $\istruc$ 
if there exists a $\istruc$-formula $\varphi(x)$ such that for each $v \in D$ we have $\sem{\varphi(x)}{x\mapsto u}{\istruc} = \top$ iff $u=v$.
	For example, over $\strRLA$, the "definable" elements are exactly the rationals. 
    An ""$\istruc$-embedded finite relational structure"" over $\vocab$ 
    is a finite relational structure whose domain consists of such definable elements.

\partitle{Logics with active–domain quantifiers}
A hybrid transitive-closure operator
\[
  \TCemb_{\bar u,\bar v}\bigl(\varphi(\bar u\,\bar v\,\bar p)\bigr)
\]
is defined like the standard $\TC$ except that every intermediate tuple  
must lie in the active domain.  We denote by \AP$\intro*\FOHTC$ first-order logic extended with hybrid TC operators.


Active-domain restrictions also apply to second-order quantifiers. In \AP$\intro*\HSO$ for hybrid second-order logic, relation variables range only over the active domain. We write
\AP$\intro*\HESO$ for its existential fragment.

For any $L\in\{\FOHTC,\HSO,\HESO\}$, let
$L[\istruc]$ denote the logic extended with predicates from an infinite structure $\istruc$,
enriched with \emph{active-domain quantifiers}
$\exists^{\mathrm{adom}}x$ and $\exists^{\mathrm{adom}}R$.
A formula is \emph{active-domain} if all quantifiers are of this form.

\partitle{Restricted-quantifier collapses}
Over the arithmetic domains introduced above, quantifiers can be restricted to the active domain without losing expressive power:

\begin{proposition}[Restricted-quantifier collapse {\cite{collapsejournal}}]
\label{prop:rqc}
Let $\istruc\in\{\strLA,\strRLA,\strROF\}$.  
Over "$\istruc$-embedded structures":
\begin{enumerate}[(i)]
  \item Every $\FOHTC[\istruc]$-formula is equivalent to an \emph{active-domain} $\FOHTC[\istruc]$-formula, and its data complexity lies in "NL".
  \item Every $\HESO[\istruc]$-formula is equivalent to an \emph{active-domain} $\HESO[\istruc]$-formula, and its data complexity lies in "NP".
\end{enumerate}
\end{proposition}

These collapses allow us enriching "GQL" with
operations from $\istruc$ while keeping the
usual bounds.

	\subsection{Extending "GQL" with Data Types}

\OMIT{
\paragraph{Extending Conditions}
In GPC querying the actual data (that is, labels, keys and properties) is within  the conditions $\theta$ in patterns of the form $\gpcppat_{\langle \theta \rangle }$. Atomic $\theta$ is either $x:\ell$ or $x.a = y.b$ or $x.a = c$.  
Enriching conditions with domain-specific operators and predicates can be quite useful as the following example shows.
\begin{example}
    Assume we are interested in detecting a money laundering scheme. To this end, we may be interested in tracking an account $x$ with a cyclic sequence of bank transfers with increasing time stamps, starting and ending in $(x)$.
    Assume we have a property graph $G$ whose nodes are bank accounts labeled with `suspicious' or `not suspicious', and edges are bank transfers with time-stamp attached to the key $c^{\mathsf{ts}}$.
    \begin{align*}
        [\TC_{u,v}(()\overset{u}{\rightarrow}()\overset{v}{\rightarrow}()_{\langle u.c^{\mathsf{ts}} >v.c^{\mathsf{ts}}\rangle})](x,x)
    \end{align*}
\end{example}
}
\AP
We fix an infinite structure $\istruc$. 
and define "GQL" queries with an $\istruc$-data type.
To this end, we first extend the definition of "terms" to $\istruc$-terms defined as follows:
\[
\gpcterm \df x.a \mid c\mid y   \mid f^m(\gpcterm_1,\ldots , \gpcterm_m) 
\]
where $c$ is an $\istruc$-"definable" constant, $y$ is a variable ranging 
over the domain of $\istruc$, $f^m$ is an $m$-ary "function name" in the "vocabulary" of $\istruc$, 
$x \in \Vars$, $a\in\KeySet$, and every $\gpcterm_i $ is an $\istruc$-"term".
We also extend the definition of "conditions" to $\istruc$-"conditions" by adding
$
\theta \df  R^m(\gpcterm_1,\ldots, \gpcterm_m) 
$
where $R^m$ is an $m$-ary "relation name" in the "vocabulary" of $\istruc$, and every $\gpcterm_i $ is an $\istruc$-"term". 
The semantics of $\theta$ extends by interpreting $\gpcterm$ and $\theta$
``inside'' $\istruc$.


\OMIT{
naturally where a constant (resp. variable
$y$) 
uninterpreted constants may 
be assigned to every value from the infinite domain (particularly, a value that does not appear in the input graph). 
}

\OMIT{
\begin{example}\sidediego{This goes in the embedded models section}
    Considering our running \Cref{ex:example-bikelanes}, suppose now that we have a relation $\textsc{Flooded}(\textit{x-coord},\textit{y-coord},\textit{radius})$ with reports of flooded areas due to recent storms. Is there a way to go from Meerkerk to Asperen by Bike lanes avoiding flooded areas? We can express such a property assuming a modelling on "ROF" with the following $\FOHTC$ sentence:
    \begin{align*}
        \exists z,z' ~ \TC_{u,v}\Big(&\exists e ~ E_u(e) \land \text{src}(e,u) \land \text{tgt}(e,v) \land \text{label}(e,\text{Bike-lane}) \land {}\\&
        \big(\exists x,y ~ \text{property}(u,\text{`x-coord'},x) \land \text{property}(u,\text{`y-coord'},y) \land {}\\&
        (\forall x',y',r ~
        \text{Flooded}(x',y',r) \rightarrow 
        (x-x')^2 + (y-y')^2 > r^2)\big)\Big)[z,z'] \land {}\\&
        \text{property}(z,\text{`name'},\text{`Meerkerk'}) \land
        \text{property}(z',\text{`name'},\text{`Asperen'})
    \end{align*}
\end{example}
}

\begin{example}
Continuing~\Cref{ex:fotc}, suppose we also want to enforce that the Manhattan distance between the towns is below a fixed \(c\).  We can add the filter
\[
|x.\text{x{-}coord} - y.\text{x{-}coord}| \;+\; |x.\text{y{-}coord} - y.\text{y{-}coord}| \;<\; c,
\]
which requires access to built-in arithmetic relations for absolute value, addition, subtraction, and comparison $<$.  
\end{example}

\OMIT{
\begin{example}\sidediego{This goes in the embedded models section}
    Suppose that in \Cref{ex:socialnet} we want to find out the maximum ages on each friendship path between Yan and Vim..
    \begin{align*}
        \phi(x_{max})=\exists z,z',\hat z ~ &\TC_{u,v}\Big(\exists e ~ E_u(e) \land \text{src}(e,u) \land \text{tgt}(e,v) \land \text{label}(e,\text{Friends}) 
        \land {}\\&\hspace{0,95cm}\exists x_{age} ~ \text{property}(v,\text{`age'},x_{age}) \land 
        x_{age} \leq x_{max} \Big)[z,\hat z] \land {}\\&
        \TC_{u,v}\Big(\exists e ~ E_u(e) \land \text{src}(e,u) \land \text{tgt}(e,v) \land \text{label}(e,\text{Friends}) 
        \land {}\\&\hspace{0,95cm}\exists x_{age} ~ \text{property}(v,\text{`age'},x_{age}) \land 
        x_{age} \leq x_{max} \Big)[\hat z,z'] \land {}\\&
        \exists x_{name} ~ \text{property}(z,\text{`name'},x_{name}) \land x_{name} = \text{`Yan'}\land {}\\&
        \exists x_{name} ~ \text{property}(z',\text{`name'},x_{name}) \land x_{name} = \text{`Vim'}\land {}\\&
        \exists x_{age} ~ \text{property}(\hat z,\text{`age'},x_{age}) \land x_{age} = x_{max}
    \end{align*}
\end{example}
}

\OMIT{
Notice that we can extend "GQL" with more than one theory by simply setting $\istruc$ as the union of all structures we are interested in.

We can extend our upper bounds to these settings, for those theories that enjoy the "RQC" property.

\sideliat{should be probably moved somewhere}
}

Using~\Cref{prop:rqc}, the upper bounds extend to GQL queries with data types.

\begin{theorem}
  For every $\istruc \in \set{\strLA,\strRLA,\strROF}$, the "data complexity" of
    the "evaluation problem@@GQL" of a Boolean "GQL" query $\gpcq$ with an 
    $\istruc$-data type is
  \begin{itemize}
      \item "NL"-complete if $\gpcq$ is 
      "restrictor-free";
      \item "PNPLOG"-complete otherwise.
  \end{itemize}
\end{theorem}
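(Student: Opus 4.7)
The plan is to extend the translations from \Cref{thm:gql_fotc} and \Cref{thm:GQLtoSO} to handle the new $\istruc$-"terms" and $\istruc$-"conditions", producing formulas in $\FOHTC[\istruc]$ for the "basic" "restrictor-free" case, in $\HESO[\istruc]$ for the "basic" "positive" fragment restricted to the $\gpckw{acyclic}$ and $\gpckw{trail}$ "restrictors", and in $\HSO[\istruc]$ otherwise. The three claimed bounds then follow immediately from \Cref{prop:HTC,prop:HSO}.

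First I would proceed by a structural induction that mirrors the original translation. The base cases $x.a$ and $c$ are handled as before. For a compound term $f^m(\gpcterm_1, \ldots, \gpcterm_m)$, each atomic occurrence of the form $f^m(\gpcterm_1, \ldots, \gpcterm_m) = \gpcterm$ inside a "condition" is rewritten, via fresh variables $v_1, \ldots, v_m, v$ bound by "active-domain" quantifiers, into
\[
    \rexists v_1 \cdots \rexists v_m \, \rexists v ~ \Big(\bigwedge_{i=1}^m \psi_{\gpcterm_i}(v_i) \Big) \wedge v = f^m(v_1, \ldots, v_m) \wedge \psi_{\gpcterm}(v),
\]
where $\psi_{\gpcterm_i}(v_i)$ is the inductively built subformula asserting that $v_i$ equals the value of $\gpcterm_i$, and $v = f^m(v_1, \ldots, v_m)$ is interpreted in $\istruc$. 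Atomic $\istruc$-"conditions" $R^m(\gpcterm_1, \ldots, \gpcterm_m)$ are handled analogously, replacing the equation by the $\istruc$-atom $R^m(v_1, \ldots, v_m)$. All remaining clauses of the two source translations carry over verbatim to the enriched "vocabulary" $\vocab \dcup \vocab_\istruc$.

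The key observation is that every newly introduced quantifier can be taken to be an "active-domain quantifier": each $v_i$ ranges over a value stored in some triple of $\propf^G$ (hence in $\adom(\gdb)$), or over a constant appearing syntactically in the fixed query (finitely many, adjoinable to $\adom(\gdb)$ without affecting "data complexity"). Hence the resulting formula lies in $\FOHTC[\istruc]$, $\HESO[\istruc]$, or $\HSO[\istruc]$ according to the fragment of $\gpcq$, and \Cref{prop:HTC} together with \Cref{prop:HSO} immediately yield the bounds "NL", "NP", and "PH" respectively.

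The main subtlety I expect is the treatment of \emph{nested} compound $\istruc$-terms such as $f^m(g^k(x.a, \ldots), \ldots)$, whose intermediate values could in principle lie outside $\adom(\gdb)$. Since $\istruc$-terms only feed into further $\istruc$-atoms, never into graph navigation, the fix is purely syntactic: flatten each atomic $\istruc$-condition into a single quantifier-free $\istruc$-formula whose free variables correspond to the property leaves $x.a$ and to constants. After this normalization every existential witness lies in $\adom(\gdb)$, and the "Restricted Quantifier Collapse" arguments behind \Cref{prop:HTC,prop:HSO} apply without modification, closing the argument.
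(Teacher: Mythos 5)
Your proposal is correct and follows essentially the same route as the paper, which (without spelling out the details) obtains the three bounds by redoing the translations of \Cref{thm:gql_fotc,thm:GQLtoSO} over the enriched vocabulary so as to land in $\FOHTC[\istruc]$, $\HESO[\istruc]$ and $\HSO[\istruc]$, and then invoking \Cref{prop:HTC,prop:HSO}. Your closing flattening step is the essential point rather than a side remark: the intermediate construction that binds the value of $f^m(v_1,\ldots,v_m)$ by an active-domain quantifier would be unsound on its own (that value need not lie in $\adom(\gdb)$), whereas after flattening only the property-value leaves and query constants occur as quantified variables, and these are genuinely active-domain elements.
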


\OMIT{
\begin{remark}
    It is easy to extend "GQL" to multiple data types (e.g. strings
    and numbers) by employing multiple infinite structures having
    "Restricted Quantifier Collapse" and typed variables,
    while preserving the same data complexity.
    The required condition here is that 
    variables of different types are not allowed to ``communicate'' with each 
    other (i.e. each theory has disjoint function and relation symbols).
\end{remark}}
"GQL" can support multiple data types (e.g., strings, numbers) using disjoint infinite structures with "restricted quantifier collapse", while preserving data complexity as long as typed variables remain isolated.

\OMIT{
\begin{example}
  Continuing our bike-lane example, assume we incorporate a theory of strings enjoying the "RQC" property and supporting the predicate
$
\mathit{prefix}(s, t)
$
that holds exactly when the string \(s\) is a prefix of \(t\). We refine our reachability check by conjoining our filterng condition with
$$
\exists s\;\bigl(\mathit{prefix}(s,\,x.\mathit{name})\;\land\;\mathit{prefix}(s,\,y.\mathit{name})\bigr)
$$
where \(s\) ranges over strings. This ensures that, in addition to being connected by bike lanes, the town names share a common prefix.
\end{example}
}

	\section{Logical Graph Querying Benefits}\label{sec:overview}

This section advocates viewing graph query languages through a logical lens. Embedding "GQL" into formal logic simplifies reasoning, clarifies expressiveness, and preserves complexity bounds. We showcase several examples unlocked by this embedding, including schema-level queries and extensions beyond plain "GQL".

\partitle{Meta-Querying} "GQL" does not support quantification over properties or labels, limiting its ability to express meta-level queries such as schema checks or property comparisons. For instance, to return pairs of nodes that hold exactly the same data, we have:
    \[
    \phi_{eq}(n,n') = \forall k, v  ~ \left(\propf(n,k,v) \leftrightarrow \propf(n',k,v)\right)
    \]
 where $\leftrightarrow$ is used as a shorthand for bidirectional implication.
Similarly, to ensure that all bike lanes in~\Cref{ex:fotc} are toll-free, we can replace $\phi(u,v)$ with 
\[
       \phi(u,v) \df \exists e\, \Big(\ E(e) \land \srcf(e,u) \land \tgtf(e,v) \land \labelf(e,\text{``BikeLane''})
       \land \mathsf{no\_toll}(e)
       \Big)
\]
where  $\mathsf{no\_toll}(e) \df\neg \exists x: \propf(e,\text{`toll'},x)$ verifies the condition holds.
While "GQL" cannot test for the absence of a toll property, this enriched query remains in "NL".

\OMIT{
\subsection*{Numeric Data Types} We now discuss queries with operations on numeric data types, which are currently not
supported by "GQL". Suppose that we have a relation $\textsc{Flooded}(\textit{x-coord},\textit{y-coord},\textit{radius})$ with reports of flooded areas due to recent storms. 
To obtain a query that  outputs pairs of accessible towns by bike lanes that avoid flooded areas, we replace $\phi_{\text{BikeLane}}(u,v)$ as appears in Equation~(\ref{eq:tc}) with $\phi_{\text{AvoidFlood}}(u)\wedge\phi_{\text{AvoidFlood}}(v) $ such that:
    \begin{multline*}
   \phi_{\text{AvoidFlood}}(u) \df     \exists x,y ~ \big( \text{property}(u,\text{`x-coord'},x) \land \text{property}(u,\text{`y-coord'},y) \land {}\\
        (\forall x',y',r ~
       ( \text{Flooded}(x',y',r) \rightarrow 
        (x-x')^2 + (y-y')^2 > r^2)
       ) \big)
    \end{multline*}
 \OMIT{   
    \begin{align*}
        \exists z,z' ~ \TC_{u,v}\Big(&\exists e ~ E_u(e) \land \text{src}(e,u) \land \text{tgt}(e,v) \land \text{label}(e,\text{Bike-lane}) \land {}\\&
        \big(\exists x,y ~ \text{property}(u,\text{`x-coord'},x) \land \text{property}(u,\text{`y-coord'},y) \land {}\\
        (\forall x',y',r ~
        \text{Flooded}(x',y',r) \rightarrow 
        (x-x')^2 + (y-y')^2 > r^2)\big)\Big)[z,z'] \land {}\\
        \text{property}(z,\text{`name'},\text{`Meerkerk'}) \land
        \text{property}(z',\text{`name'},\text{`Asperen'})
    \end{align*}
    }
To do so, we use the extension of $\FOTC$ with
    constraints over the Real Ordered Field \cite{collapsejournal}.

Other spatial features can also be added into the query, as is common in constraint databases. For example, we can verify that all the nodes in the path between two nodes lie on a line (see, \eg, Chapter 13 of \cite{libkinbook}). Here, we may easily
modify the above query by existentially quantifying $r$ and $s$ such that $y = rx+s$ for any coordinate $(x,y)$ in the path. Such a quantification is an example of non-active-domain quantifiers (since $r$ and $s$ might not be in the database) which is not possible in "GQL". Despite this, the data complexity for the extended logic is still "NL", as was proven
in \cite{collapsejournal}. This will be discussed in more details in Section \ref{sec:domspecific}.
}
\partitle{Beyond GQL's Expressiveness} 
"GQL" is limited in checking conditions on unbounded edge traversals, for instance, checking for decreasing values on edges along a path~\cite{Gheerbrant2025GQL}. The next query expresses this: outputs towns accessible by bike lanes with decreasing lengths:  
\[\phi(n,n') \df \exists e,e'\,\left(\phi_{<}(e,e') \wedge \tgtf({e,n}) \wedge \srcf({e',n})
\right)\] where 
\begin{equation*}
\phi_{<}(e,e') \df[ \TC_{u,v}
\Big(
{E}(u) \wedge {E}(v) \wedge \phi_{\text{incE}}(u,v)  
 \Big)
 ](e,e')
\end{equation*}
where $\phi_{\text{incE}}(u,v)$ checks if $u$ and $v$ are adjacent edges with decreasing lengths:
\begin{multline*}
\phi_{\text{incE}}(u,v) \df \exists n\, (\srcf({v,n})   \wedge  \tgtf({u,n})) \wedge   \exists l,l'\\(\propf(u,\text{``length''},l) \wedge \propf(v,\text{``length''},l') \wedge l<l')
\end{multline*}
Here we compute the transitive closure over edges rather than nodes, as it is in "GQL"'s iteration.

	\section{Conclusion}\label{sec:conc}
	
In this paper, we have answered arguably one of the most fundamental questions 
regarding evaluating "GQL" queries, i.e., its data complexity. Although its
complexity ("PNPLOG") is higher than that of its SQL counterpart (in 
polynomial-time), we have shown that the restrictor-free fragment is in "NL".
We have shown that our results extend to the setting with data.

\partitle{Related Work}
The study of graph databases in database theory can be traced back to the work
of Mendelzon and Wood~\cite{MW95} on Regular Path Queries (RPQ), where
they popularized the data model of \emph{edge-labeled graphs}. This has been the
standard graph data model for a long time in database theory
\cite{bonifati-book,CGLV00,MW95,BLLW12,FLS98,HKVZ13}). Since then 
the area has undergone several paradigm shifts on the notion of a ``graph data 
model'', including \emph{data graphs} \cite{LMV16} and \emph{property graphs}
\cite{pablo-survey,gql,gpc-pods}. Data graphs were introduced in
\cite{LV12,LMV16} 
owing to the lack
of a treatment of ``data'' (properties associated with nodes and edges, like
``id'', ``name'' and ``age'' of a person) in the data model of edge-labeled graphs. Many classical query languages (like RPQ) have been extended to also
incorporate such data \cite{pablo-survey,LMV16,BFL15,DBLP:conf/pods/FigueiraJL22}, e.g., in the query 
language RDPQ (Regular Data Path Queries). Data graphs still do
not capture the graph model that has been adopted by practitioners, e.g., the
``schemaless'' nature of a graph. The property graph data model
\cite{pablo-survey,gql} has been 
proposed to address such deficiencies.

Unlike the case of GQL, the precise data complexity of most query languages for edge-labeled graphs (e.g. RPQs and extensions) and data graphs (e.g. RDPQs) are known, which are "NL"-complete. For example, see \cite{MW95,BLLW12,FLS98,HKVZ13,bonifati-book,LMV16}. These results were also recently extended to the case of numeric and string data domains and operations \cite{DBLP:conf/pods/FigueiraJL22}. The reason for the low computational complexity in this case is similar to the case of "GQL" without restrictors, which lead to "NP"-hardness \cite{MartensNP23}. Incidentally, our technique of relational embedding into can be shown to provide a simpler proof of the "NL" upper bound in \cite{DBLP:conf/pods/FigueiraJL22}.

\partitle{Future Work}
We believe that our logical perspectives on graph databases could further advance
the study of graph databases. As we have shown, not only do they generalize "GQL" in expressivity (e.g. meta-querying, cf. Section \ref{sec:overview}), they also allow us to employ classical techniques from relational structures (e.g. embedded finite model theory). 

One interesting future avenue
concerns the extension of GQL with aggregation over paths (by, \eg, summing, counting, averaging, 
over a given path). While aggregation results are available for relational databases (\eg~in embedded finite model theory), most positive results (\eg~
see Chapter 5 of \cite{fmt-app}) do not extend to paths and remain within
first-order logic. 


\OMIT{
In this paper, we explored the relational perspective on graph databases, demonstrating how existing results from relational database theory can be applied to the study of graph query languages, particularly "GQL". We showed that "GQL" can be embedded into "\FOTC" and \FOESO, allowing us to establish tight data complexity bounds. Our contributions include providing a formal connection between "GQL" and well-established relational query languages, extending "GQL" with data types while preserving complexity, and demonstrating how these insights can be applied to schema validation, meta-querying, and querying data types in graph databases.

We believe that logical perspectives on graph databases could further advance
the study of graph databases. To this end, we leave several concrete problems for 
future work. Another interesting problem
concerns aggregating over paths (by, \eg, summing, counting, averaging, 
over a given path). While aggregation results are available for relational databases (\eg in embedded finite model theory), most positive results (\eg
see Chapter 5 of \cite{fmt-app}) do not extend to paths and remain within
first-order logic. We propose extending the results of \cite{collapsejournal}
on $\FOHTC$ and $\SO$ to also support aggregation.
}

\OMIT{
Our study frames property graph querying in a classical logical settings, showing that the ISO-emerging standard "GQL" admits faithful embeddings into first-order logic with transitive closure (\FOTC) and into existential second-order logic with first-order views (\FOESO).  These embeddings yield \emph{tight}  complexity bounds and anchor "GQL" within the landscape of reasoning formalisms.  

 Logical characterisations can highlight the trade-offs between expressivity and tractability in knowledge graphs.  
A natural direction for future work is to incorporate aggregation over paths, such as summing, counting, or averaging edge attributes, into the logical framework. While aggregation has been studied in the relational setting (e.g., in embedded finite model theory), most positive results (see Chapter 5 of \cite{fmt-app}) are restricted to flat relational structures and do not extend to paths. We suggest extending the results of~\cite{collapsejournal} on $\FOHTC$ and $\HSO$ to support such forms of aggregation.
}

    \paragraph{Acknowledgment.} We thank Michael Benedikt and Leonid Libkin
    for the valuable feedback.
    Lin is supported by the European Research 
    Council\footnote{https://doi.org/10.13039/100010663} under Grant 
    No.~101089343 (LASD). 
	
	\bibliographystyle{plainurl}
	\bibliography{references}
	
\end{document}